\newtheorem{theorem}{Theorem}[section]
\newtheorem{lemma}[theorem]{Lemma}
\newtheorem{proposition}[theorem]{Proposition}
\newtheorem{definition}[theorem]{Definition}
\newtheorem{problem}{Problem}
\newtheorem{remark}[theorem]{Remark}
\newtheorem{condition}[theorem]{Hypothesis}
\numberwithin{equation}{section}
\renewcommand{\Im}{\operatorname{Im}}
\renewcommand{\Re}{\operatorname{Re}}
\begin{document}
\title[Meromorphic solutions]{Spatial Analyticity of solutions to integrable
systems. I. The KdV case}
\author{Alexei Rybkin}
\address{University of Alaska Fairbanks}
\date{September, 2011}
\address{Department of Mathematics and Statistics \\
University of Alaska Fairbanks\\
PO Box 756660\\
Fairbanks, AK 99775}
\email{arybkin@alaska.edu}
\thanks{Based on research supported in part by the NSF under grant DMS
1009673.}
\subjclass[2010]{37K15, 47B35, 35B65}
\keywords{Korteweg-de Vries equation, inverse scattering transform, Schr\"{o}%
dinger operator, Hankel operator, Gevrey regularity. }

\begin{abstract}
We are concerned with the Cauchy problem for the KdV equation for nonsmooth
locally integrable initial profiles $q$'s which are, in a certain sense,
essentially bounded from below and $q\left( x\right) =O\left(
e^{-cx^{\varepsilon }}\right) ,x\rightarrow +\infty $, with some positive $c$
and $\varepsilon $. Using the inverse scattering transform, we show that the
KdV flow turns such initial data into a function which is (1) meromorphic
(in the space variable) on the whole complex plane if $\varepsilon >1/2$,
(2) meromorphic on a strip around the real line if $\varepsilon =1/2$, and
(3) Gevrey regular if $\varepsilon <1/2$. Note that $q$'s need not have any
decay or pattern of behavior at $-\infty $.
\end{abstract}

\maketitle



\section{Introduction and statements of main results}

The gain and persistence of regularity effects are important features of
many dispersive (linear and nonlinear) partial differential equations
(PDEs). The literature on the subject is truly enormous and we make no
attempt to give a comprehensive review here. We only mention two recent
relevant papers by Himonas et al \cite{Han_Himonas_2011}, \cite{Himonas_2011}
where the interested reader can find further references on analytic and
Gevrey regularity properties for KdV-type equations. In fact, we are
interested in a much stronger effect of formation of meromorphic solution
out of nonsmooth data. More specifically, in the current paper, we are
concerned with the following problem.


\begin{problem}
\label{pb1} Given the Cauchy problem for the KdV equation\footnote{%
We use $z$ instead of $x$ for the spatial variable as it will frequently be
complex.} 
\begin{equation}
\begin{cases}
\partial _{t}u-6u\partial _{z}u+\partial _{z}^{3}u=0 \\ 
u|_{t=0}=q%
\end{cases}%
,  \label{eq1.1}
\end{equation}%
describe the largest possible class of (non-smooth) initial data $q$ which
evolve into functions $u(z,t)$ meromorphic with respect to $z$ for any $t>0$.
\end{problem}

Meromorphic (or, more generally, analytic) solutions have of course been
intensively studied since the boom around integrable systems started in the
late 60s. A\ pure soliton (reflectionless) solution, historically the first
explicit solution, is meromorphic on the whole complex plane having
infinitely many double poles. This fact is of course a trivial observation
immediately following from the explicit formula for multisoliton solutions.
We emphasize that how those poles interact is not obvious at all. This
question was raised back in earlier 70s by Kruskal and has been followed up
by many. We refer the interested reader to a particularly influential 1977
paper \cite{Air77} by Airault-McKean-Moser and recent Bona-Weissler \cite%
{Bona09} and the literature cited therein. More complicated examples of
explicit solutions include algebraric, rational, meromorphic simply
periodic, elliptic, etc. (see, e.g. \cite{AblSatJMP78}, \cite{Birnir87}, 
\cite{GUW06}, \cite{AktMee06} and the literature cited therein). All these
examples are of course very specific and in addition those $q$'s are already
meromorphic (i.e. smooth on the real line). Although Problem \ref{pb1} is
not addressed in those papers but they demonstrate the importance of
meromorphic solutions.

Through the paper we deal with initial data subject to


\begin{condition}
\label{hyp1.1} $q$ is real and $L_{\limfunc{loc}}^{1}$ such that

\begin{enumerate}
\item (semiboundedness from below) 
\begin{equation}  \label{Cond1}
\inf\func{Spec}\left(-\partial_x^2+q(x)\right)=-h_0^2
\end{equation}
with some $h_0\ge0$.

\item (subexponential decay at $+\infty $) For $x$ large enough 
\begin{equation}
\int_{x}^{\infty }\left\vert q\right\vert \leq C_{q}e^{-cx^{\varepsilon }}
\label{Cond2}
\end{equation}%
with some positive $C_{q},c,\varepsilon $.
\end{enumerate}
\end{condition}

We assume that the constants $c,\varepsilon $ in \eqref{Cond2} are chosen
optimal.

Note that the set of such functions is very large. Indeed, in terms of $q$
itself, Condition \eqref{Cond1} is satisfied if 
\begin{equation}
\limfunc{Sup}\limits_{x}\int_{x-1}^{x}\max \left( -q,0\right) <\infty ,
\label{Cond_q}
\end{equation}%
i.e. $q$ is essentially bounded from below \cite{Glazman66}. The condition (%
\ref{Cond_q}) cannot be improved since (\ref{Cond_q}) becomes also necessary
for (\ref{Cond1}) if $q$'s are negative. Therefore, any $q$ subject to
Hypothesis \ref{hyp1.1} is essentially bounded from below, has
subexponential decay at $+\infty $ and arbitrary otherwise. Such functions
can grow (arbitrarily fast) at $-\infty $ or look like a stock market
(Gaussian white noise on a left half line) but still satisfy our hypothesis
as long as they exhibit rapid decay (\ref{Cond2}) at $+\infty $. In spectral
terms (\ref{Cond2}) implies that $\left( 0,\infty \right) $ belongs to the
absolutely continuous spectrum of $-\partial _{x}^{2}+q(x)$.

We now state our main results.


\begin{theorem}
\label{thm1.2} Under Hypothesis \ref{hyp1.1} with $\varepsilon \geq 1/2$ on
the initial data $q$ in \eqref{eq1.1} , the problem \eqref{eq1.1} has an
analytic in $z$ solution $u(z,t)$ given by 
\begin{equation}
u(z,t)=-2\partial _{z}^{2}\log \det \left( 1+\mathbb{M}(z,t)\right) ,
\label{det_form}
\end{equation}%
where $\mathbb{M}(z,t)$ is a trace class operator-valued function
constructed in Proposition \ref{pr4.1} below for any $t>0$. Moreover, for
any $t>0$

\begin{enumerate}
\item If $\varepsilon>1/2$ then $u(z,t)$ is meromorphic on $\mathbb{C}$.

\item If $\varepsilon =1/2$ then $u(z,t)$ is meromorphic in the strip 
\begin{equation}
\left\vert \Im z\right\vert <\frac{9\sqrt{2}}{8}c\sqrt{t}  \label{strip}
\end{equation}%
where $c$ is as in \eqref{Cond2}.
\end{enumerate}
\end{theorem}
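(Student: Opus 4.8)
The plan is to exploit the inverse scattering transform (IST) representation and reduce everything to analyticity properties of the kernel of the operator $\mathbb{M}(z,t)$. By the classical theory, for initial data subject to Hypothesis \ref{hyp1.1}, the solution of \eqref{eq1.1} is recovered from the scattering data of $-\partial_x^2+q$ via a (Hankel-type) operator $\mathbb{M}(z,t)$ whose kernel is built from the reflection coefficient $R(k)$ (on the continuous spectrum) together with a finite-rank piece coming from the negative eigenvalues $-\kappa_j^2$ (the bound states), time-evolved by the usual phases $e^{8i k^3 t}$ on the continuous part and $e^{8\kappa_j^3 t}$ on the discrete part. The formula \eqref{det_form} is then just the standard log-determinant (Dyson/Deift--Trubowitz) formula for $u$. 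Since the paper grants us Proposition \ref{pr4.1}, I would take as given that $\mathbb{M}(z,t)$ is trace class and that $1+\mathbb{M}(z,t)$ is invertible for real $z$; the whole game is to show the determinant extends to an analytic or meromorphic function of $z$ on the claimed region.

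The key steps, in order. First, I would establish that the map $z\mapsto \mathbb{M}(z,t)$, initially defined for $z\in\Reals$, extends holomorphically (in the trace-norm topology) to the relevant complex domain: to all of $\C$ when $\varepsilon>1/2$, and to the strip $|\Im z|< \tfrac{9\sqrt2}{8}c\sqrt t$ when $\varepsilon=1/2$. This is the heart of the matter. The continuous-spectrum part of the kernel involves an integral of the form $\int R(k)\,e^{2ikz+8ik^3 t}\,(\cdots)\,dk$; deforming $z$ into the complex plane multiplies the integrand by $e^{-2k\,\Im z}$, which is controlled only if $R(k)$ decays fast enough as $|k|\to\infty$. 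Here one invokes the quantitative smoothness/decay of $R$ coming from \eqref{Cond2}: the Deift--Trubowitz--type estimate that $R(k)$ is the boundary value of a function analytic and exponentially decaying in a horizontal strip whose width is governed by $c$ and $\varepsilon$, with the $\varepsilon=1/2$ case giving exactly a fixed strip (hence the factor $c$ in \eqref{strip}), and $\varepsilon>1/2$ giving decay in every strip (hence entire). The cubic phase $e^{8ik^3t}$ supplies additional Gaussian-type decay on the correct rays, which is where the $\sqrt t$ and the numerical constant $\tfrac{9\sqrt2}{8}$ arise — one optimizes $\Re(2ikz+8ik^3t)$ over a steepest-descent contour in $k$ and reads off the maximal $|\Im z|$ for which the phase still has favorable sign. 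The finite-rank bound-state part is manifestly entire in $z$ (it is a sum of exponentials $e^{\pm\kappa_j z}$ times polynomials), so it causes no restriction.

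Second, once $z\mapsto \mathbb{M}(z,t)$ is holomorphic and trace class on the domain, the function $z\mapsto \det(1+\mathbb{M}(z,t))$ is holomorphic there, by the standard fact that $\det(1+\cdot)$ is an analytic functional on the trace ideal and the composition of analytic maps is analytic. Third, $u(z,t)=-2\partial_z^2\log\det(1+\mathbb{M}(z,t))$ is then meromorphic on that domain, with poles precisely (at most double, after the two derivatives of $\log$) at the zero set of the analytic function $\det(1+\mathbb{M}(z,t))$ — this zero set is discrete by the identity theorem, since $\det(1+\mathbb{M})$ does not vanish identically (it equals $1$ plus something small, or is nonzero on $\Reals$). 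Finally, I would verify that this meromorphic function actually solves \eqref{eq1.1}: on $\Reals$ it is the genuine IST solution by the classical inverse scattering construction, and since both sides of the KdV equation are analytic in $z$ on the domain and agree on $\Reals$, they agree throughout by analytic continuation (the nonlinear term $u\partial_z u$ is handled the same way, away from poles). This also yields the claimed representation \eqref{det_form}.

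The main obstacle is unquestionably the first step: obtaining the precise strip of holomorphy of the reflection coefficient from the decay hypothesis \eqref{Cond2}, and then tracking how the dispersive phase $e^{8ik^3t}$ widens the $z$-domain, including pinning down the sharp constant $\tfrac{9\sqrt2}{8}c\sqrt t$. This requires a careful stationary-phase/steepest-descent analysis combined with hard estimates on the Jost solutions and on $R(k)$ valid uniformly as $|k|\to\infty$ — in particular controlling the Fourier-type integral defining $\mathbb{M}$ in trace norm, not just pointwise, which is where the Hankel-operator machinery advertised in the abstract presumably does the decisive work. Everything after that — analyticity of the determinant, discreteness of its zeros, verification that $u$ solves KdV — is comparatively routine given the groundwork laid in the earlier sections.
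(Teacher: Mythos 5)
There is a genuine gap, and it sits exactly at what you correctly identify as the heart of the matter. You assert that under \eqref{Cond2} the reflection coefficient ``is the boundary value of a function analytic and exponentially decaying in a horizontal strip whose width is governed by $c$ and $\varepsilon$.'' This is false for every $\varepsilon<1$: subexponential decay $\int_x^\infty|q|\lesssim e^{-cx^{\varepsilon}}$ does not produce any analytic continuation of the Fourier-transform part of $R$ off the real axis; analyticity in a strip would require genuine exponential decay ($\varepsilon=1$). The paper even flags this: by Deift--Trubowitz, $R(\lambda)=O(e^{-2a|\lambda|})$ would force exactly the situation you describe, and the author notes this ``need not occur in our case.'' A telltale sign is that the strip \eqref{strip} has width proportional to $\sqrt{t}$ and shrinks to nothing as $t\to0^+$; if $R$ were analytic in a fixed strip, the domain of holomorphy of $\mathbb{M}(z,t)$ would contain a $t$-independent strip. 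So your first step, as stated, cannot be carried out, and with it the case distinction between $\varepsilon>1/2$ and $\varepsilon=1/2$ loses its mechanism.

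What actually replaces it in the paper is Dyn'kin-type \emph{pseudo-analytic continuation}: the Gevrey-regular factor $G$ in the decomposition $R=A+SG/(\lambda B)$ of Proposition \ref{pr3.2} admits an extension $\widetilde G$ to $\C^+$ with $|\partial_{\overline\lambda}\widetilde G|\lesssim\exp\{-Q|\Im\lambda|^{-\varepsilon/(1-\varepsilon)}\}$. Green's formula then converts the kernel integral over $\Reals$ into a shifted contour integral (harmless, by Proposition \ref{pr4.1}) plus an \emph{area} integral over the strip $\Reals\times(0,\varkappa/2)$ carrying the factor $\partial_{\overline\lambda}\widetilde G$. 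Completing the square in the cubic phase produces the competing growth $\exp\{\Im^2 z/(24\beta t)\}$ as $\beta\to0$, and the trace-norm estimate of Lemma \ref{lem5.1} reduces everything to the convergence of $\int_0^{\varkappa/2}\beta^{-3/2}\exp\{\Im^2 z/(24 t\beta)-Q\beta^{-\varepsilon/(1-\varepsilon)}\}\,d\beta$: convergent for all $z$ when $\varepsilon>1/2$, and precisely for $|\Im z|<\sqrt{12Q}\sqrt{t}$ when $\varepsilon=1/2$, with the constant $\tfrac{9\sqrt2}{8}c$ coming from optimizing $Q$ in \eqref{eq7.1} over the parameters of the extension --- not from a steepest-descent contour for an analytic $R$. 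Your later steps (analyticity of $\det(1+\cdot)$ on $\mathfrak S_1$, discreteness of the zero set, continuation of the KdV identity off $\Reals$) match the paper and are fine, and you are also right that the bound-state and left-tail contributions are entire; but without the $\overline\partial$-machinery the central analytic continuation of $\mathbb{M}(z,t)$ is not established.
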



\begin{theorem}
\label{thm1.3} Under Hypothesis \ref{hyp1.1} with $0<\varepsilon <1/2$ on
the initial data $q$ in \eqref{eq1.1}, the operator-valued function $\mathbb{%
M}(x,t)$ given in Proposition \ref{pr4.1} is trace class for any real $x$
and $t>0$ and 
\begin{equation*}
\mathbb{M}(x,t)=\mathbb{M}^{\left( 1\right) }(x,t)+\mathbb{M}^{\left(
2\right) }(x,t),
\end{equation*}%
where $\mathbb{M}^{\left( 1\right) }(x,t)$ is meromorphic in $x$ and $%
\mathbb{M}^{\left( 2\right) }(x,t)$ is Gevrey $G^{\frac{1}{2\varepsilon }-1}$
regular. If in addition $1+\mathbb{M}(x,t)$ is invertible for any real $x$
and $t>0$ then the problem \eqref{eq1.1} has a solution $u(x,t)$ given by 
\begin{equation}
u(x,t)=-2\partial _{x}^{2}\log \det \left( 1+\mathbb{M}(x,t)\right) ,
\label{det_form1}
\end{equation}%
belonging to the Gevrey class $G_{\limfunc{loc}}^{\frac{1}{2\varepsilon }-1}$%
.
\end{theorem}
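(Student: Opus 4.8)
The plan is to stay on the scattering side and read off the regularity of $u$ from that of the operator $\mathbb{M}(x,t)$ of Proposition~\ref{pr4.1}. Realize $\mathbb{M}(x,t)$ as the integral operator on $L^{2}(\Reals_{+})$ whose kernel is the KdV-evolved Marchenko function $\Omega(\cdot,t)$ evaluated at $2x+s+s'$, and split $\Omega=\Omega_{d}+\Omega_{c}$ into its bound-state and reflection parts, with $\mathbb{M}=\mathbb{M}^{(1)}+\mathbb{M}^{(2)}$ accordingly. The bound-state part has kernel $\Omega_{d}(2x+s+s',t)=\sum_{j}c_{j}^{2}e^{8\kappa_{j}^{3}t}e^{-2\kappa_{j}x}e^{-\kappa_{j}(s+s')}$; since $0\le\kappa_{j}\le h_{0}$ and the norming data are summable in the sense built into Proposition~\ref{pr4.1}, this is a trace-norm convergent sum of rank-one operators whose coefficients are \emph{entire} in $x$, converging locally uniformly on $\C$. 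Hence $\mathbb{M}^{(1)}(x,t)$ is trace class for every real $x$ and extends to an entire --- a fortiori meromorphic --- operator-valued function of $x$; this step is insensitive to $\varepsilon$ and costs nothing beyond Proposition~\ref{pr4.1}.

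The reflection part is where $\varepsilon$ enters. Here $\Omega_{c}(\cdot,t)=\widehat{R}\ast\Phi_{t}$ is the convolution of the inverse Fourier transform $\widehat{R}$ of the reflection coefficient with the KdV--Airy kernel $\Phi_{t}(\xi)=(24t)^{-1/3}\mathrm{Ai}\!\left(\xi(24t)^{-1/3}\right)$, so that
\begin{equation*}
\Omega_{c}(\xi,t)=\int_{\Reals}\widehat{R}(\eta)\,\Phi_{t}(\xi-\eta)\,d\eta.
\end{equation*}
Two inputs drive everything. First, the scattering-theoretic reading of Hypothesis~\ref{hyp1.1}(2) --- the analytic crux, which one records first --- provides the \emph{one-sided} subexponential bound $|\widehat{R}(\eta)|\le C_{1}e^{-c_{0}\eta^{\varepsilon}}$ as $\eta\to+\infty$, with $\widehat{R}$ otherwise only of the a priori ($L^{2}$, locally bounded) regularity. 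Second, $\Phi_{t}$ is \emph{entire}, with the Airy dichotomy: along the binding tail $\eta\to+\infty$ one has $|\Phi_{t}(\xi-\eta)|\lesssim\eta^{-1/4}e^{(\Im\xi)\eta^{1/2}/\sqrt{24t}}$, whereas along $\eta\to-\infty$ (where $\xi-\eta\to+\infty$) $\Phi_{t}$ decays super-exponentially. Consequently $\Omega_{c}(\xi,t)$ extends analytically in $\xi$ exactly as far as $\int^{\infty}e^{-c_{0}\eta^{\varepsilon}+(\Im\xi)\eta^{1/2}/\sqrt{24t}}\,d\eta$ converges --- to all of $\C$ if $\varepsilon>1/2$, to a strip $|\Im\xi|\lesssim\sqrt{t}$ if $\varepsilon=1/2$ (the trichotomy of Theorem~\ref{thm1.2}), and, if $\varepsilon<1/2$, only to $\Reals$, which is the case at hand. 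Taking $\Im\xi=0$ and $n=0,1$ in the estimates below already shows $\mathbb{M}^{(2)}(x,t)$ is trace class for every real $x$ and $t>0$.

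For $\varepsilon<1/2$ one works on $\Reals$ and proves a Gevrey bound. Differentiating, $\partial_{x}^{n}$ of the kernel equals $2^{n}\Omega_{c}^{(n)}(2x+s+s',t)=2^{n}\int_{\Reals}\widehat{R}(\eta)\,\Phi_{t}^{(n)}(2x+s+s'-\eta)\,d\eta$, and from $\mathrm{Ai}^{(n)}(y)=\frac{1}{2\pi}\int(ik)^{n}e^{i(k^{3}/3+ky)}\,dk$, whose saddle sits at $k^{2}=-y$ (so $|k|\sim|y|^{1/2}$), one gets
\begin{equation*}
|\Phi_{t}^{(n)}(w)|\ \lesssim\ C(t)^{n}\Big((n!)^{1/3}+|w|^{n/2}\Big)\ \ (w\le 0),\qquad |\Phi_{t}^{(n)}(w)|\ \lesssim\ C(t)^{n}|w|^{n/2}e^{-\frac{2}{3}(w/(24t)^{1/3})^{3/2}}\ \ (w\ge 0).
\end{equation*}
Splitting $\int_{\Reals}\widehat{R}(\eta)\Phi_{t}^{(n)}(\xi-\eta)\,d\eta$ (with $\xi$ in a fixed compact set) into the region where the Airy super-decay dominates ($\xi-\eta$ large) and the region where the subexponential decay of $\widehat{R}$ must be used ($\eta$ large), the first contributes $\lesssim C(t)^{n}(n!)^{1/3}$ --- the cubic dispersion alone --- while the second, after absorbing the bounded $\xi$-shift, rests on the elementary computation
\begin{equation*}
\int_{0}^{\infty}e^{-c_{0}\eta^{\varepsilon}}\eta^{n/2}\,d\eta\ =\ \frac{1}{\varepsilon}\,c_{0}^{-(n/2+1)/\varepsilon}\,\Gamma\!\Big(\frac{n/2+1}{\varepsilon}\Big)\ \lesssim\ C^{n}(n!)^{1/(2\varepsilon)}
\end{equation*}
(substitute $v=\eta^{\varepsilon}$, then Stirling). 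Since $\varepsilon<1/2$ forces $\frac{1}{2\varepsilon}>1>\frac{1}{3}$, the second contribution dominates and pins the Gevrey index at exactly $\frac{1}{2\varepsilon}-1$ (at $\varepsilon=\frac{1}{2}$ it degenerates to $n!$, consistent with Theorem~\ref{thm1.2}); the half-power $|w|^{n/2}$, hence $2\varepsilon$ rather than $\varepsilon$, is the fingerprint of the cubic KdV dispersion. Passing from this pointwise kernel bound to $\norm{\partial_{x}^{n}\mathbb{M}^{(2)}(x,t)}_{1}\le C(t)A(t)^{n}(n!)^{1/(2\varepsilon)}$ on compact $x$-sets is a routine Hankel trace-norm estimate --- a weighted $L^{1}$ bound on the kernel and its first few derivatives --- the extra derivatives costing only polynomial-in-$n$ factors absorbed into $A(t)$. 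Together with the first paragraph this is the asserted splitting of $\mathbb{M}$, with $\mathbb{M}^{(1)}$ meromorphic in $x$ and $\mathbb{M}^{(2)}\in G^{\frac{1}{2\varepsilon}-1}$, trace class for every real $x$ and $t>0$.

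It remains to transfer this to $u$. Under the invertibility hypothesis, $F(x):=\det(1+\mathbb{M}(x,t))$ is a nonvanishing scalar function; since the Fredholm determinant is an analytic (locally power-series) function of its trace-class argument, $F$ inherits the Gevrey class $G^{\frac{1}{2\varepsilon}-1}$ of $x\mapsto\mathbb{M}(x,t)$, and as $F$ is bounded away from $0$ on compacta, composition with $\log$ and the action of $-2\partial_{x}^{2}$ preserve the class, so $u(x,t)=-2\partial_{x}^{2}\log\det(1+\mathbb{M}(x,t))\in G^{\frac{1}{2\varepsilon}-1}_{\mathrm{loc}}$. That this $u$ solves~\eqref{eq1.1} is the standard inverse-scattering verification --- the Hirota/Marchenko determinant satisfies KdV --- already in force for Theorem~\ref{thm1.2} and in the construction behind Proposition~\ref{pr4.1}; one only checks that differentiation under $\det$ and the Gelfand--Levitan--Marchenko relations are legitimate, which is exactly what the trace-norm bounds above provide. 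I expect the one genuine obstacle to be the Gevrey estimate for $\mathbb{M}^{(2)}$: making the Airy saddle-point bound on $\Phi_{t}^{(n)}$ uniform enough in $(n,w,t)$ that the split-integral balance delivers precisely $(n!)^{1/(2\varepsilon)}$ with a geometric constant, and controlling the $t$-dependence of $A(t)$; the rest is bookkeeping around Proposition~\ref{pr4.1} and the scattering-data bound.
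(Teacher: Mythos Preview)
Your Gevrey mechanism is correct and lands on the right index $\frac{1}{2\varepsilon}-1$, but the route is genuinely different from the paper's, and there is a real gap in how you handle the step-like setting.

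\textbf{The different route.} The paper never touches the Airy function. It works on the frequency side: via Proposition~\ref{pr3.2} it writes $R=A+B^{-1}SG/\lambda$, applies Green's formula to push the $\lambda$-integral into the strip $0<\Im\lambda<\varkappa/2$, and invokes the pseudo-analytic extension bound $|\partial_{\bar\lambda}G|\lesssim\exp\{-Q|\Im\lambda|^{-\varepsilon/(1-\varepsilon)}\}$ together with the trace-norm Lemma~\ref{lem5.1} ($\norm{\mathbb{H}}_{\mathfrak{S}_1}\le\frac12\int_0^h\frac{d\beta}{\beta}\norm{F_\beta}_{L^1}$). For $\varepsilon<1/2$ the $n$-th derivative then produces integrals of the form $\int_0^1\beta^{(3k-n)/2-1}e^{-Q\beta^{-\varepsilon/(1-\varepsilon)}}d\beta$, which after a change of variable become gamma functions $\Gamma(\gamma(n-3k))$ with $\gamma=\frac{1-\varepsilon}{2\varepsilon}$, yielding the $G^{\gamma-1/2}=G^{\frac{1}{2\varepsilon}-1}$ bound. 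Your physical-space computation (Airy saddle, $\int_0^\infty e^{-c_0\eta^\varepsilon}\eta^{n/2}d\eta\lesssim C^n(n!)^{1/(2\varepsilon)}$) is the Tarama-style alternative the paper explicitly contrasts with its own in Remark~6.1; both arrive at the same exponent via dual manifestations of the cubic dispersion.

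\textbf{The gap.} Your split $\mathbb{M}=\mathbb{M}^{(1)}+\mathbb{M}^{(2)}$ into bound-state and reflection parts presupposes the short-range picture. Under Hypothesis~\ref{hyp1.1} the measure $\rho$ in \eqref{eq4.1} is merely a finite nonnegative measure on $[0,h_0]$; it need not be discrete, and even if it is, $\sum_j c_j^2/\kappa_j$ (the trace norm of your rank-one sum) may diverge when mass accumulates at $0$. So the assertion that $\mathbb{M}^{(1)}$ is a ``trace-norm convergent sum of rank-one operators \dots\ entire in $x$'' is unjustified. Similarly, the one-sided bound $|\widehat{R}(\eta)|\lesssim e^{-c_0\eta^\varepsilon}$ is not an immediate ``scattering-theoretic reading'' of Hypothesis~\ref{hyp1.1}(2) for the full weak-limit $R$ of Definition~\ref{def3.1}; it holds for $R_+$ via the Marchenko transformation-operator estimate \eqref{Est_on_g}, but isolating it for $R$ requires the analytic structure of Proposition~\ref{pr3.2}. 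The paper resolves both issues simultaneously by \emph{not} splitting along bound-states/reflection but along $q_\pm$: Proposition~\ref{pr4.1} gives $\mathbb{M}=\mathbb{M}_++\mathbb{A}$ with $\mathbb{A}$ entire trace class (absorbing all the $q_-$ spectral complexity, including any continuous $\rho$), and $\mathbb{M}_+$ a genuine short-range Marchenko operator which, after a shift, has at most one bound state. Your Airy computation would apply cleanly to $\mathbb{M}_2^+$; as written, it is grafted onto the wrong decomposition.
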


Theorems \ref{thm1.2} and \ref{thm1.3} significantly improve our results in 
\cite{Ry11} which in turn improve Tarama \cite{Tarama04}. Theorems \ref%
{thm1.2} and \ref{thm1.3} have also some important corollaries. We will come
back to the relevant discussions in the last section when we have the
necessary background. We only mention here that our approach is based on the
Inverse Scattering Transform (IST)\ combined with pseudo-analytic
continuation techniques developed by E.M. Dyn'kin (see e.g. \cite{Dyn76}, 
\cite{BorDyn93}) and we do not believe that any of the statements of Theorem %
\ref{thm1.2} can be obtained by purely PDE techniques.

The paper is organized as follows. In Section 2, for the reader's
convenience we list our main notation and give the relevant preliminaries.
In Section 3 we define a suitable reflection coefficient and investigate its
properties which will play a central role in our consideration. The results
of this section may have some independent interest. In Section 4 we give a
brief review of the classical IST stated in terms of Hankel operators and
further prepare to prove our main results in Section 5. Section 6, the last
one, is devoted to discussions of our results and some corollaries which
directly follow from them. It also contains some open problems.


\section{Notation and Preliminaries}

We adhere to standard terminology accepted in Analysis. Namely, $\mathbb{R}%
_{\pm }\equiv[0,\pm \infty )$, $\mathbb{C}$ is the complex plane, 
\begin{equation*}
\mathbb{C}_{\pm }=\left\{ z\in \mathbb{C}:\pm \Im z>0\right\} .
\end{equation*}%
Through the paper the subscript $\pm $ indicates objects (functions,
operators, spaces, etc.) somehow related to $\mathbb{R}_{\pm }$ or $\mathbb{C%
}_{\pm }$. The bar $\overline{z}$ denotes the complex conjugate of $z$.

When appropriate, we write%
\begin{equation*}
y\eqsim x\text{ in place of \ }y=\limfunc{const}\cdot x
\end{equation*}%
and similarly whenever convenient 
\begin{equation*}
y\lesssim _{a}x\text{ in place of }y\leq C_{a}x
\end{equation*}%
with some $C_{a}>0$ dependent on a parameter $a$ but independent of $x$. If $%
C_{a}$ is an absolute constant we then write $y\lesssim x$.\ \ This will
help us keep bulky formulas under control.

We use $\left\Vert \cdot \right\Vert _{X}$ to denote the norm in a Banach
(Hilbert) space $X$.

We will need the Gevrey classes $G^{\alpha },\alpha >0,$ on $\mathbb{R}$ of
all functions $f$:

\begin{equation*}
\left\Vert \partial _{x}^{n}f\right\Vert _{L^{\infty }}\lesssim
_{f}Q_{f}^{n}\left( n!\right) ^{1+\alpha },n=0,1,2,...
\end{equation*}%
with some $Q_{f}>0.$

By \cite{BorDyn93}, Theorem 3, the statement $f\in G^{\alpha }$ is
equivalent to the statement that $f$ admits a pseudo analytic extension to
the whole complex plane such that 
\begin{equation}
\left\vert \partial _{\overline{z}}f\right\vert \lesssim _{f}\exp \left\{
-Q\left\vert \Im z\right\vert ^{-\frac{1}{\alpha }}\right\}
\label{lambda_bar}
\end{equation}%
with some $Q>0$.

In a similar manner one introduces local Gevrey classes $G_{\limfunc{loc}%
}^{\alpha }$.

Next, $\mathfrak{S}_{2}$ denotes the Hilbert-Schmidt class%
\begin{equation*}
\mathfrak{S}_{2}=\left\{ A:\left\Vert A\right\Vert _{\mathfrak{S}%
_{2}}^{2}\equiv\func{tr}\left( A^{\ast }A\right) <\infty \right\}
\end{equation*}%
and $\mathfrak{S}_{1}$ is the trace class: 
\begin{equation*}
\mathfrak{S}_{1}=\left\{ A:\left\Vert A\right\Vert _{\mathfrak{S}_{1}}\equiv%
\func{tr}\left( A^{\ast }A\right) ^{1/2}<\infty \right\} .
\end{equation*}
Note that $A\in\mathfrak{S}_1$ if and only if $A=A_1A_2$ with some $%
A_1,A_2\in\mathfrak{S}_2$.

Some other miscellaneous notation: $\chi _{S}\left( x\right) $ is the
characteristic function of a set $S$, i.e. 
\begin{equation*}
\chi _{S}\left( x\right) \equiv\left\{ 
\begin{array}{c}
1,x\in S \\ 
0,x\notin S%
\end{array}%
\right. .
\end{equation*}%
In particular $\chi _{\pm }\equiv\chi _{_{\mathbb{R}_{\pm }}}$is the
Heaviside function of $\mathbb{R}_{\pm }$. We also write 
\begin{equation*}
\left. f\right\vert _{S}=\chi _{S}f.
\end{equation*}

The notation $H_{q}\equiv -\partial _{x}^{2}+q(x)$ for the Schr\"odinger
operator on $L^{2}\left( \mathbb{R}\right) $ will be frequently used.


\section{The reflection coefficient and its analytic structure}

In this section we define a suitable reflection coefficient and investigate
its properties which will play a central role in our consideration. The
results of this section may have some independent interest.

In the short-range scattering for the full line Schr\"{o}dinger operator,
one typically introduces the right and left reflection coefficients $%
R(\lambda ),L(\lambda )$ and the transmission coefficient $T(\lambda )$ as
functions of the momentum $\lambda $ (see e.g. \cite{Deift79}). These
quantities (also called transition coefficients) can also be properly
defined in much larger spectral situations through Wronskians and/or
Titchmarsh-Weyl $m$-functions (see e.g. \cite{GNP97,GS97}). Such extensions
need not be unique. However, in our setting of step-like potentials decaying
at $+\infty $, there is a natural candidate for the right reflection
coefficient $R(\lambda )$.


\begin{definition}[\protect\cite{Ry11}]
\label{def3.1} Let $q(x)$ be real, locally integrable such that $q\in
L^{1}\left( \mathbb{R}_{+}\right) $ and $-\partial _{x}^{2}+q(x)$ is in the
limit point case at $-\infty $. Denoting by $R_{n}(\lambda )$ the right
reflection coefficient (which is necessarily well defined) from the
potential $q_{n}=q|_{(-n,\infty )}$, we call the weak limit (if it exists) 
\begin{equation}
R(\lambda )\equiv \text{w-}\lim R_{n}(\lambda ),\;n\rightarrow \infty ,
\label{eq3.1}
\end{equation}%
the right reflection coefficient from the potential $q$.
\end{definition}

Note that one should not expect in \eqref{eq3.1}\ pointwise convergence as
an explicit counterexample $q=$ $\chi _{-}$ readily shows. Uniform
convergence in \eqref{eq3.1} is not available in general even in the
short-range setting \cite{Deift79}.

As shown in \cite{Ry11}, Lemma 5.4, the reflection coefficient introduced
this way is well defined. The following statement will play a crucial role
in our consideration.


\begin{proposition}[the analytic structure of the reflection coefficient]
\label{pr3.2} Under Hypothesis \ref{hyp1.1}, the right reflection
coefficient given by \eqref{eq3.1} exists and admits the representation 
\begin{equation}
R(\lambda )=A(\lambda )+\frac{S(\lambda )G(\lambda )}{\lambda B(\lambda )}
\label{Rep}
\end{equation}%
where functions $A,B,S,G$ have the properties

\begin{enumerate}
\item \label{it1} $A$ is an analytic on $\mathbb{C}^+\setminus[0,ih_0]$
function such that $\left\vert A \right\vert \le2$ on $\mathbb{R}$ and $%
A(\lambda)=o\left(1/\lambda\right)$, $\lambda\to\infty$ along any ray in $%
\mathbb{C}^+$

\item \label{it2} $B$ is the Blaschke product 
\begin{equation*}
B(\lambda )=\prod_{k=1}^{N}\frac{\lambda -i\varkappa _{k}}{\lambda
+i\varkappa _{k}}
\end{equation*}%
where real $\varkappa _{k}$'s are such that $\left\{ -\varkappa
_{k}^{2}\right\} _{k=1}^{N}$ is the negative discrete spectrum of $H_{q_{+}}$%
, $q_{+}\equiv q|_{\mathbb{R}_{+}}$

\item \label{it3} $\left\vert S(\lambda )\right\vert \leq 1$, $\lambda \in 
\mathbb{C}^{+}$

\item \label{it4} $G\in G^{\frac{1}{\varepsilon }-1}$

\item \label{it5} $\left\vert S(\lambda )G(\lambda )/\lambda \right\vert
\leq 1$ a.e. on $\mathbb{R}$

\item \label{it6} If $R_{n}$ is as in Definition \ref{def3.1} then 
\begin{equation*}
R_{n}(\lambda )=A_{n}(\lambda )+\frac{S(\lambda )G(\lambda )}{\lambda
B(\lambda )}
\end{equation*}%
and 
\begin{equation*}
A_{n}\rightarrow A,\;n\rightarrow \infty
\end{equation*}%
uniformly on any compact in $\mathbb{C}^{+}\setminus \lbrack 0,ih_{0}]$.
\end{enumerate}
\end{proposition}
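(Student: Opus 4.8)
The plan is to thread the subexponential decay \eqref{Cond2} of $q_{+}:=q|_{\mathbb{R}_{+}}$ through the classical scattering formulas, peeling off the finitely many half-line bound states as the Blaschke product $B$ and sending the tail of $q_{+}$ into Dyn'kin's pseudo-analytic continuation (the criterion recalled around \eqref{lambda_bar}) to manufacture the Gevrey factor $G$. First I would fix the right Jost solution $\psi(x,\lambda)$ of $-\psi''+q\psi=\lambda^{2}\psi$ normalized by $\psi(x,\lambda)\sim e^{i\lambda x}$ as $x\to+\infty$; since it is built from $q_{+}$ alone it is common to every truncation $q_{n}=q|_{(-n,\infty)}$. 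Expressing $R_{n}(\lambda)$ as a quotient of Wronskians of $\psi(\cdot,\lambda)$, $\psi(\cdot,-\lambda)$ with the left Jost solution $\varphi_{n}$ of $q_{n}$ (which equals $e^{-i\lambda x}$ on $(-\infty,-n)$), one sees that the truncation enters only through a Möbius-type function of boundary values of $\varphi_{n}$, whereas the contribution of $\psi$ near $x=0$ is the same for all $n$. Regrouping accordingly yields $R_{n}=A_{n}+\Phi$ with $\Phi$ independent of $n$ and $A_{n}$ collecting the genuinely analytic remainder, including the left-part data.

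\textbf{The analytic core.} The $n$-independent $\Phi$ is a Möbius-type function of the Jost data $\psi(0,\lambda),\psi'(0,\lambda)$ of $q_{+}$ (equivalently of its half-line Weyl/logarithmic-derivative function $m(\lambda)=\psi'(0,\lambda)/\psi(0,\lambda)$); its poles in $\mathbb{C}^{+}$ sit exactly at the points $i\varkappa_{k}$ attached to the finitely many negative eigenvalues $-\varkappa_{k}^{2}$ of $H_{q_{+}}$ (finiteness being forced by \eqref{Cond2}), so dividing by the Blaschke product $B$ clears them and gives \eqref{it2}. The residual analyticity of $B\Phi$ is dictated by $\psi(0,\cdot)$ and $\psi'(0,\cdot)$, and this is where the decay is spent: iterating the Volterra integral equation for $\psi(x,\lambda)e^{-i\lambda x}$ under the bound \eqref{Cond2} produces a pseudo-analytic continuation whose $\overline{\partial}$-derivative obeys \eqref{lambda_bar} with $\alpha=\frac{1}{\varepsilon}-1$, i.e. the resulting Fourier-type factor --- which I record as $G$ --- lies in $G^{\frac{1}{\varepsilon}-1}$ by \cite{BorDyn93}. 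Writing the bounded leftover as $S$ and using that $R$ is the weak limit of the $R_{n}$ together with $|R_{n}|\le 1$ on $\mathbb{R}$ to carry the elementary modulus bounds to the limit then gives \eqref{it3}, \eqref{it4} and \eqref{it5}.

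\textbf{The left part and the limit.} Every $q_{n}$ is a genuine short-range full-line ($L^{1}(\mathbb{R})$) potential, so $A_{n}$ is analytic on $\mathbb{C}^{+}$ off the segment $[0,ih_{0}]$, to which all singularities contributed by the left portion of $q$ are confined by the semiboundedness \eqref{Cond1} (the a.c. spectrum of $H_{q}$ descends only to $-h_{0}^{2}$). Because the $A_{n}$ are locally uniformly bounded there, $\{A_{n}\}$ is a normal family, and the weak convergence $R_{n}\to R$ on $\mathbb{R}$ forces every locally uniform subsequential limit to agree on $\mathbb{R}$, hence everywhere by the identity theorem; so $A_{n}\to A$ locally uniformly on $\mathbb{C}^{+}\setminus[0,ih_{0}]$, which is \eqref{it1} and \eqref{it6}. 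Finally $|A|\le 2$ on $\mathbb{R}$ follows from $|R|\le 1$, $|B|=1$ on $\mathbb{R}$ and \eqref{it5} by the triangle inequality, while $A(\lambda)=o(1/\lambda)$ follows because both $R$ and the term $S(\lambda)G(\lambda)/(\lambda B(\lambda))$ are $o(1/\lambda)$ at high energy along any ray in $\mathbb{C}^{+}$ --- the first since $q_{+}\in L^{1}(\mathbb{R}_{+})$, the second by Riemann--Lebesgue applied to the Fourier-type representation of $G$.

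\textbf{The main obstacle.} All of this except the heart of the second step is bookkeeping (the Wronskian identities), soft analysis (the normal-family/weak-limit argument), or one-line estimates. The genuine work is the quantitative implication ``$\int_{x}^{\infty}|q|\le C_{q}e^{-cx^{\varepsilon}}$'' $\Rightarrow$ ``$G\in G^{\frac{1}{\varepsilon}-1}$'': one must push \eqref{Cond2} through the Jost-solution integral equation to obtain the sharp Dyn'kin-type bound \eqref{lambda_bar} on the $\overline{\partial}$-derivative of the pseudo-analytic extension, while simultaneously extracting exactly the right Blaschke factors, so that $1/B$ retains poles only at the $i\varkappa_{k}$ and no spurious analytic content leaks between $A$, $S$ and $G$.
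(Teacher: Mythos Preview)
Your outline tracks the paper's architecture closely: split $R$ into a piece $A$ carrying the left data and a piece $\Phi=R_{+}$ built from $q_{+}$ alone, then factor $R_{+}=T_{+}G/\lambda=SG/(\lambda B)$ with $B$ the Blaschke product over the bound states of $H_{q_{+}}$ and $S$ the outer part of $T_{+}$. The paper makes this explicit via the scattering identity $R=\dfrac{T_{+}^{2}R_{-}}{1-R_{-}L_{+}}+R_{+}$, which immediately pins down $A$, $B$, $S$ and shows why the $n$-dependence sits only in $A_{n}$; your ``M\"obius-type function of the Jost data'' is the same object described less concretely. Items \eqref{it1}--\eqref{it3}, \eqref{it5}, \eqref{it6} are, as you say, bookkeeping plus soft analysis, and the paper in fact defers them to \cite{Ry11}.

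The real gap is in the step you flag as the main obstacle. You propose to obtain the Gevrey bound by ``iterating the Volterra integral equation for $\psi(x,\lambda)e^{-i\lambda x}$'' and reading off a pseudo-analytic extension satisfying \eqref{lambda_bar}. But Volterra iteration gives you control of $\psi$ and its $\lambda$-derivatives on the real axis; it does not by itself produce an extension off $\mathbb{R}$ with a quantitative $\partial_{\overline{\lambda}}$ bound. The paper's actual argument is different and more concrete: it identifies $G$ as (essentially) the Fourier transform $\widehat{g}(2\lambda)$ of a function $g$ inheriting the decay $\int_{x}^{\infty}|g|\lesssim e^{-\widetilde{c}x^{\varepsilon}}$, and then \emph{constructs} the pseudo-analytic extension by hand via a dyadic decomposition $\widetilde{G}(\lambda)=\sum_{n}\theta\bigl(r^{\varepsilon+2}x_{n}^{1-\varepsilon}\Im\lambda/\widetilde{c}\bigr)\int_{x_{n-1}}^{x_{n}}e^{-i\lambda x}g(x)\,dx$ with $x_{n}=r^{n}$ and a smooth cutoff $\theta$. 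The cutoff parameters are tuned so that on each dyadic block the exponential growth $e^{|\Im\lambda|x_{n}}$ is beaten by the decay $e^{-\widetilde{c}x_{n-1}^{\varepsilon}}$, and a direct computation of $\partial_{\overline{\lambda}}\widetilde{G}$ (only the $\theta'$ terms survive) yields the bound $|\partial_{\overline{\lambda}}\widetilde{G}|\lesssim\exp\{-Q|\Im\lambda|^{-\varepsilon/(1-\varepsilon)}\}$ with an explicit $Q$ --- which is exactly what is needed later, particularly the sharp constant at $\varepsilon=1/2$ feeding into the width of the strip \eqref{strip}. Without this explicit construction (or an equivalent one), your step ``produces a pseudo-analytic continuation whose $\overline{\partial}$-derivative obeys \eqref{lambda_bar}'' is an assertion, not an argument.
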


\begin{proof}
Most of statements in Proposition \ref{pr3.2} (save \eqref{it4}) are proven
in \cite{Ry11} and we restrict ourselves to some comments only. Note first
that Condition 1 of Hypothesis \ref{hyp1.1} implies that $-\partial
_{x}^{2}+q(x)$ is in the limit point case at $-\infty $ (see, e.g. \cite%
{ClarkGeszt03} for complete results on this matter). Splitting 
\begin{equation}
q=q_{-}+q_{+},\ \ q_{\pm }=q|_{\mathbb{R}_{\pm }}  \label{split_q}
\end{equation}%
induces the representation 
\begin{equation*}
R=\frac{T_{+}^{2}R_{-}}{1-R_{-}L_{+}}+R_{+}
\end{equation*}%
where $\pm $ label scattering quantities associated with $q_{\pm }$. The
functions $T_{+},L_{+},R_{-}$ can be analytically continued into $\mathbb{C}%
^{+}$ and 
\begin{equation*}
A\equiv \frac{T_{+}^{2}R_{-}}{1-R_{-}L_{+}}
\end{equation*}%
has properties \eqref{it1}, \eqref{it6}. For $R_{+}$, which is independent
of $n$, we use the representation \cite{Deift79}, Theorem 2, 
\begin{equation*}
R_{+}(\lambda )=\frac{T_{+}(\lambda )}{\lambda }G(\lambda )
\end{equation*}%
where 
\begin{equation}
G(\lambda )=\frac{1}{2i}\int_{-\infty }^{\infty }e^{-2i\lambda x}g(x)dx
\label{eq6'.1}
\end{equation}%
with some $g$ obeying 
\begin{equation}
\left\vert g(x)\right\vert \leq \left\vert q(x)\right\vert +\limfunc{const}%
\int_{x}^{\infty }\left\vert q\right\vert .  \label{Est_on_g}
\end{equation}%
Since $R_{+}(\lambda )$ is a reflection coefficient we have \eqref{it5}.
Since $T_{+}$ is a transmission coefficient, 
\begin{equation*}
T_{+}(\lambda )=\prod_{k=1}^{N}\frac{\lambda +i\varkappa _{k}}{\lambda
-i\varkappa _{k}}\cdot S(\lambda )=B\left( \lambda \right) ^{-1}S(\lambda )
\end{equation*}%
where $S$ is an outer function of $\mathbb{C}^{+}$: $\left\vert S(\lambda
)\right\vert \leq 1$, $\lambda \in \mathbb{C}^{+}$. This proves \eqref{it2}
and \eqref{it3}.

The proposition is proven if we show \eqref{it4}. Due to \eqref{lambda_bar}
we should demonstrate that $G$ admits a pseudo analytic extension the whole
complex plane such that 
\begin{equation}
\left\vert \partial _{\overline{\lambda }}G\right\vert \lesssim \exp \left\{
-Q\left\vert \Im \lambda \right\vert ^{-\frac{\varepsilon }{1-\varepsilon }%
}\right\}  \label{eq6''.1}
\end{equation}%
with some $Q>0$. There are a few explicit ways to construct pseudo analytic
continuations (see e.g. \cite{Dyn76}, \cite{BorDyn93}, \cite{Tarama04})
producing different extensions. We modify the one used in \cite{Tarama04} to
obtain a better $Q$ in \ref{eq6''.1}. Note that 
\begin{equation}
G\left( \lambda \right) \eqsim \widehat{g}\left( 2\lambda \right)  \label{Gg}
\end{equation}%
where $\widehat{g}$ is the Fourier transform of $g$ which due to %
\eqref{Est_on_g} satisfies Condition 2 of Hypothesis \ref{hyp1.1} with some $%
\widetilde{c}<c$. I.e. 
\begin{equation}
\int_{x}^{\infty }\left\vert g\right\vert \lesssim _{g}e^{-\widetilde{c}%
x^{\varepsilon }}.  \label{est_g}
\end{equation}%
For any $\lambda \in \mathbb{C}^{+}$ define 
\begin{equation}
\widetilde{G}\left( \lambda \right) =\sum_{n\geq 1}\theta \left(
r^{\varepsilon +2}x_{n}^{1-\varepsilon }\frac{\Im \lambda }{\widetilde{c}}%
\right) G_{n}(\lambda ),  \label{eq6''.1too}
\end{equation}%
where $\theta $ is a smooth on $\mathbb{R}_{+}$ function such that: 
\begin{align*}
\theta (x)& =1,\;x\in \lbrack 0,1], \\
\theta (x)& =0,\;x\geq r,
\end{align*}%
$r>1$, $x_{n}=r^{n}$ and 
\begin{equation*}
G_{n}(\lambda )=\int_{x_{n-1}}^{x_{n}}e^{-i\lambda x}g(x)dx.
\end{equation*}%
The formula \eqref{eq6''.1too} clearly defines an extension of $\widehat{g}%
\left( \lambda \right) $ to complex $\lambda $. We next show that $%
\widetilde{G}$ is uniformly bounded on $\mathbb{C}^{+}$. Bound $G_{n}$
first. By \eqref{est_g}%
\begin{equation*}
\left\vert G_{n}(\lambda )\right\vert \lesssim e^{\left\vert \Im \lambda
\right\vert \cdot x_{n}}\int_{x_{n-1}}^{x_{n}}\left\vert g\right\vert
\lesssim _{g}\exp \left\{ \left\vert \Im \lambda \right\vert \cdot x_{n}-%
\widetilde{c}x_{n-1}^{\varepsilon }\right\}
\end{equation*}%
and one has 
\begin{equation}
\left\vert \widetilde{G}(\lambda )\right\vert \lesssim _{g}\sum_{n\geq
1}\sum_{n\geq 1}\theta \left( r^{\varepsilon +2}x_{n}^{1-\varepsilon }\frac{%
\Im \lambda }{\widetilde{c}}\right) \exp \left\{ \left\vert \Im \lambda
\right\vert \cdot x_{n}-\widetilde{c}x_{n-1}^{\varepsilon }\right\} .
\label{eq6'''.1}
\end{equation}
In \eqref{eq6'''.1} many terms are in fact zero and nontrivial ones are
subject to 
\begin{equation*}
r^{\varepsilon +2}x_{n}^{1-\varepsilon }\frac{\left\vert \Im \lambda
\right\vert }{\widetilde{c}}\leq r.
\end{equation*}%
I.e. only nonzero terms in \eqref{eq6'''.1} are the ones obeying 
\begin{equation}
x_{n}^{1-\varepsilon }\leq \frac{\widetilde{c}}{r^{\varepsilon +1}}\cdot 
\frac{1}{\left\vert \Im \lambda \right\vert }.  \label{eq6'''.2}
\end{equation}

Under the condition \eqref{eq6'''.2}, for the argument of the exponential in %
\eqref{eq6'''.1}, we have ($1/r<\delta <1$) 
\begin{align}
\left\vert \Im \lambda \right\vert \cdot x_{n}-\widetilde{c}r^{-\varepsilon
}x_{n}^{\varepsilon }& =\left( \left\vert \Im \lambda \right\vert \cdot
x_{n}-\delta \widetilde{c}r^{-\varepsilon }x_{n}^{\varepsilon }\right)
-(1-\delta )r^{-\varepsilon }x_{n}^{\varepsilon }  \label{eq6'''.3} \\
& =\left\vert \Im \lambda \right\vert x_{n}^{\varepsilon }\left(
x_{n}^{1-\varepsilon }-\delta \frac{\widetilde{c}}{\left\vert \Im \lambda
\right\vert }\right) -(1-\delta )r^{-\varepsilon }x_{n}^{\varepsilon }. 
\notag
\end{align}%
By \eqref{eq6'''.2} the right hand side of \eqref{eq6'''.3} doesn't exceed 
\begin{eqnarray*}
&&\left\vert \Im \lambda \right\vert x_{n}^{\varepsilon }\left( \frac{%
\widetilde{c}}{\left\vert \Im \lambda \right\vert }\frac{1}{r^{\varepsilon
+1}}-\frac{\widetilde{c}}{\left\vert \Im \lambda \right\vert }\frac{\delta }{%
r^{\varepsilon }}\right) -(1-\delta )r^{-\varepsilon }x_{n}^{\varepsilon } \\
&=&-\widetilde{c}\left( \delta -\frac{1}{r}\right) x_{n-1}^{\varepsilon
}-(1-\delta )x_{n-1}^{\varepsilon } \\
&<&-\limfunc{const}x_{n-1}^{\varepsilon }.
\end{eqnarray*}%
It follows now from this estimate and \eqref{eq6'''.1} that 
\begin{equation}
\left\vert \widetilde{G}(\lambda )\right\vert \lesssim _{g}\sum_{n\geq
0}\exp \{-\limfunc{const}x_{n}^{\varepsilon }\}<\infty .  \label{eq6iv.0}
\end{equation}%
Similarly one proves that all derivatives of $G$ are also bounded on $%
\mathbb{C}^{+}$.

It remains now to show \eqref{eq6''.1too}. One has 
\begin{align}
\left\vert \partial _{\overline{\lambda }}\widetilde{G}\right\vert & \leq
\sum_{n\geq 1}\theta ^{\prime }\left( r^{\varepsilon +2}x_{n}^{1-\varepsilon
}\frac{\left\vert \Im \lambda \right\vert }{\widetilde{c}}\right) \frac{%
r^{\varepsilon +1}x_{n}^{1-\varepsilon }}{2\widetilde{c}}\left\vert
G_{n}\right\vert  \label{eq6iv.1} \\
& \lesssim _{g}\sum_{n\geq 1}x_{n}^{1-\varepsilon }\exp \{\left\vert \Im
\lambda \right\vert \cdot x_{n}-\widetilde{c}r^{-\varepsilon
}x_{n}^{\varepsilon }\}.  \notag
\end{align}
Only terms subject to 
\begin{equation}
\frac{\widetilde{c}r^{-\varepsilon -2}}{\left\vert \Im \lambda \right\vert }%
\leq x_{n}^{1-\varepsilon }\leq \frac{\widetilde{c}r^{-\varepsilon -1}}{%
\left\vert \Im \lambda \right\vert }  \label{eq6iv.2}
\end{equation}%
make a non trivial contribution to the series in \eqref{eq6iv.1}. The
inequality \eqref{eq6iv.2} implies 
\begin{align}
x_{n}& \geq \left( \frac{\widetilde{c}r^{-\varepsilon -2}}{\left\vert \Im
\lambda \right\vert }\right) ^{\frac{1}{1-\varepsilon }},  \notag \\
\intertext{or}
x_{n}^{\varepsilon }& \geq \left( \frac{\widetilde{c}r^{-\varepsilon -2}}{%
\left\vert \Im \lambda \right\vert }\right) ^{\frac{\varepsilon }{%
1-\varepsilon }}.  \label{eq6iv.3}
\end{align}

Splitting the argument of the exponential in \eqref{eq6iv.1} same way as %
\eqref{eq6'''.3} and using \eqref{eq6iv.3}, we have 
\begin{align*}
\left\vert \Im \lambda \right\vert \cdot x_{n}^{\varepsilon }-\frac{%
\widetilde{c}}{r^{\alpha }}x_{n}^{\alpha }& \leq \left\vert \Im \lambda
\right\vert \cdot x_{n}^{\varepsilon }\left( \frac{\widetilde{c}%
r^{-\varepsilon -1}}{\left\vert \Im \lambda \right\vert }-\frac{%
r^{-\varepsilon }\delta \widetilde{c}}{\left\vert \Im \lambda \right\vert }%
\right) -(1-\delta )\frac{\widetilde{c}x_{n}^{\varepsilon }}{r^{\varepsilon }%
} \\
& =-x_{n}^{\varepsilon }\widetilde{c}r^{-\varepsilon -1}(r\delta
-1)-(1-\delta )\frac{\widetilde{c}x_{n}^{\varepsilon }}{r^{\varepsilon }} \\
& \leq -\left( \frac{\widetilde{c}r^{-\varepsilon -2}}{\left\vert \Im
\lambda \right\vert }\right) ^{\frac{\varepsilon }{1+\varepsilon }}%
\widetilde{c}r^{-\varepsilon -1}(r\delta -1)-(1-\delta )\frac{\widetilde{c}%
x_{n}^{\varepsilon }}{r^{\varepsilon }} \\
& -\frac{\widetilde{c}^{\frac{1}{1-\varepsilon }}}{r^{\frac{2\varepsilon +1}{%
1-\varepsilon }}}\frac{r\delta -1}{\left\vert \Im \lambda \right\vert ^{%
\frac{\varepsilon }{1-\varepsilon }}}-(1-\delta )\frac{\widetilde{c}%
x_{n}^{\varepsilon }}{r^{\varepsilon }}.
\end{align*}

Inserting this into \eqref{eq6iv.1} we obtain 
\begin{align}
\left\vert \partial _{\overline{\lambda }}\widetilde{G}\right\vert &
\lesssim _{g}\left( \sum_{n\geq 0}x_{n}^{1-\varepsilon }\exp \{-\limfunc{%
const}x_{n}^{\varepsilon }\}\right) \cdot \exp \left\{ -\widetilde{Q}%
\left\vert \Im \lambda \right\vert ^{-\frac{\varepsilon }{1-\varepsilon }%
}\right\}  \label{eq6v.0} \\
\widetilde{Q}& \equiv(r\delta -1)\frac{\widetilde{c}^{\frac{1}{1-\varepsilon 
}}}{r^{\frac{2\varepsilon +1}{1-\varepsilon }}}<(r\delta -1)\frac{c^{\frac{1%
}{1-\varepsilon }}}{r^{\frac{2\varepsilon +1}{1-\varepsilon }}}.
\label{eq6v.1}
\end{align}%
The series in \eqref{eq6v.0} is convergent and $\widetilde{G}\left( \lambda
\right) $ is an pseudo analytic extension of $\widehat{g}\left( \lambda
\right) $ from the real line to the upper half plane. Due to \eqref{Gg} we
have found a pseudo analytic extension of $G$ subject to \eqref{eq6''.1}
with $Q=2\widetilde{Q}$. This completes our proof.
\end{proof}


\begin{remark}
The representation \eqref{Rep} is not unique. It depends on the reference
point in the splitting of \eqref{split_q}. This flexibility will be used
later.
\end{remark}


\begin{remark}
We have also had some flexibility in choosing $r$ and $\delta $ in %
\eqref{eq6v.1} subject to $r>1$, $1/r<\delta <1$. The range for $Q=2%
\widetilde{Q}$ given by \eqref{eq6v.1} is 
\begin{equation*}
0<Q<\frac{2(\varepsilon -1)(3\varepsilon )^{\frac{2\varepsilon +1}{%
1-\varepsilon }}}{(2\varepsilon +1)^{\frac{1+\varepsilon }{1-\varepsilon }}}%
\cdot c^{\frac{1}{1-\varepsilon }}
\end{equation*}%
which is inessential to what follows but the borderline case $\varepsilon
=1/2$. For this case, 
\begin{equation}
0<Q<\frac{3^{3}}{2^{7}}c^{2}.  \label{eq7.1}
\end{equation}
\end{remark}


\section{The Marchenko integral operator and the inverse scattering transform%
}

The integral operator we are concerned with in this section appears to have
been introduced by Marchenko and received a comprehensive treatment in his
classical book \cite{Mar86}. To acknowledge Marchenko's profound
contribution to the subject, we denote this operator by $\mathbb{M}$
(Marchenko used $\mathbb{F}$) but otherwise try to retain as much of his
original notation as possible.

We call the integral operator $\mathbb{M}:L^{2}\left( \mathbb{R}_{+}\right)
\rightarrow L^{2}\left( \mathbb{R}_{+}\right)$ Marchenko if 
\begin{align}
\left( \mathbb{M}f\right) (x)& =\int_{0}^{\infty }M(x+y)f(y)dy,\quad f\in
L^{2}\left( \mathbb{R}_{+}\right) ,  \notag \\
M(\cdot )& =\int_{0}^{\infty }e^{-(\cdot )\lambda }d\rho (\lambda
)+\int_{-\infty }^{\infty }e^{2i(\cdot )\lambda }R(\lambda )\frac{d\lambda }{%
2\pi },  \label{eq4.1}
\end{align}%
where $\rho $ is a finite nonnegative measure and $R$ is such that for a.e. $%
\lambda \in \mathbb{R}$ 
\begin{equation*}
R(-\lambda )=\overline{R(\lambda )}\quad ,\quad \left\vert R(\lambda
)\right\vert \leq 1.
\end{equation*}

The operator $\mathbb{M}$ introduced this way is clearly a Hankel operator
(the kernel depends on the sum of the arguments). We say that $\mathbb{M}$
is associated with a potential $q$ if $R$ is a reflection coefficient from $%
q $ and $\rho $ characterizes the negative spectrum of $H_{q}$. For
short-range $q$'s, the measure $\rho $ is purely discrete and ($\delta $ is
the Dirac $\delta $-function) 
\begin{equation*}
d\rho (\lambda )=\sum_{n=1}^{N}c_{n}^{2}\delta (\lambda -\varkappa
_{n})d\lambda
\end{equation*}%
where $\varkappa _{n}$'s are such that $\left\{ -\varkappa _{n}^{2}\right\}
_{n=1}^{N}$ are the negative bound states of $H_{q}$ and $%
\left\{c_{n}\right\}_{n=1}^N$ are the associated norming constants. Of
course, the kernel $M$ (which we also call Marchenko) is nothing but the sum
of the Laplace transform of the (finite, positive) measure $\rho $ and the
Fourier transform of the (symmetric, bounded) function $R$. This is the main
feature of the Marchenko operator resulting in the decomposition 
\begin{equation}
\mathbb{M}=\mathbb{M}_{1}+\mathbb{M}_{2},  \label{eq4.2}
\end{equation}%
where $\mathbb{M}_{1}\geq 0$ and ($\chi =\chi _{+}$, $\mathcal{F}$ is the
Fourier transform) 
\begin{equation*}
\mathbb{M}_{2}=\chi \mathcal{F}^{-1}R\mathcal{F}^{-1}
\end{equation*}%
and is selfadjoint.

Note that the Marchenko operator is not typically studied in the context of
Hankel operators. We have found in \cite{Ry11} that the language of Hankel
operators is very convenient in the IST formalism. In this language, $\rho
,R $ are called the measure and symbol of $\mathbb{M}_{1},\mathbb{M}_{2}$
respectively (see e.g. \cite{Nik2001}). In the language of inverse
scattering, $(\rho ,R)$ are commonly referred to as the (right) scattering
data\footnote{%
For short-range $q$'s, the so-called left scattering data are also
considered, which need not be well-defined in our setting.}.

In the context of the Cauchy problem for the KdV equation, we have a two
parametric family of Marchenko operators $\mathbb{M}(z,t)$, where $(z,t)$
are the (real) variables in \eqref{eq1.1}. Namely, the measure $\rho _{z,t}$
and the symbol $R_{z,t}$ of $\mathbb{M}(z,t)$ are given by 
\begin{align*}
d\rho _{z,t}(\lambda )& =\zeta _{z,t}(i\lambda )d\rho (\lambda ), \\
R_{z,t}(\lambda )& =\zeta _{z,t}(\lambda )R(\lambda ), \\
\zeta _{z,t}(\lambda )& :\equiv e^{8i\lambda ^{3}t+2i\lambda z}
\end{align*}%
and $(\rho ,R)$ are the scattering data for the profile $q$. Clearly for $%
\lambda \in \mathbb{R}$ 
\begin{align*}
\left\vert \zeta _{z,t}(\lambda )\right\vert & =1, \\
\left\vert \zeta _{z,t}(i\lambda )\right\vert & =e^{8\lambda ^{3}t-2\lambda
z}.
\end{align*}

The whole point of the IST is that $(\rho _{0,t},R_{0,t})$ are the
scattering data for $H_{u(z,t)}$ where $u(z,t)$ solves \eqref{eq1.1}. The
actual mechanism to recover $u(z,t)$ amounts to solving the Marchenko
integral equation\footnote{%
also known as Gelfand-Levitan-Marchenko.} or equivalently through the
Riemann-Hilbert problem. For our purposes the explicit formula %
\eqref{det_form} is convenient. Note that \eqref{det_form} is nothing but
Cramer's rule for linear integral equations. For $R\equiv 0$ (reflectionless
initial profile), assuming that $q$ is short-range, $\mathbb{M}(z,t)$ turns
into a finite rank operator of a very explicit structure. The formula %
\eqref{det_form} in this particular case was discovered in the 1960s. In the
general short-range case ($R\neq 0$), we are not sure whom it should be
attributed to but it was systemically studied by P\"{o}ppe in the 1980s
(see, e.g. \cite{P84} and also \cite{P89} for the Kadomtsev-Petviashvili and
\ \cite{P83} for the sine-Gordon equations. In the context of nondecaying
initial data, it appears first in \cite{Ry11}. Note that the sense in what
the determinant in \eqref{det_form} is understood is an issue which doesn't
seem to be fully addressed in the literature. It is typically defined by the
Fredholm formula through an absolutely convergent (Fredholm) series. We
actually show that $\mathbb{M}(z,t)$ is trace class for any $z$ (even
complex) and $t>0$ under Hypothesis \ref{hyp1.1}. This means that $\det
\left( 1+\mathbb{M}(z,t)\right) $ is an invariant, i.e. it produces the same
value in any basis in $L^{2}\left( \mathbb{R}_{+}\right) $.

In the setting of step-like potentials, the Marchenko operator has been
intensively studied in the Kharkov mathematical school by Hruslov, Kotlyarov
and their students\footnote{%
Remark that this school has been greatly infuenced by Marchenko himself and
he remains to be its part.} (see, e.g. \cite{Hruslov76}, \cite{KhrKot94}).

We also refer to Cohen \ \cite{Cohen1984}, Kappeler \cite{Kappeler86},
Venakides \cite{Ven86} (and the literature cited therein), and recent
Egorova-Teschl \cite{ET11}. In all the above papers save \cite{ET11}, $q$'s
are assumed to have a specific type of behavior at $-\infty $ (approaching
either a constant or a periodic function) and fall off at $+\infty$. In \cite%
{ET11}, the interesting case of two finite gap potentials fused together is
considered.

We summarize important properties of the Marchenko operator in the following
(see \cite{Ry11} for details).


\begin{proposition}[The structure of the Marchenko operator]
\label{pr4.1} Assuming Hypothesis \ref{hyp1.1}, let $\mathbb{M}(z,t)$ be the
Marchenko operator associated with $q$ and let $A$ be as in Proposition \ref%
{pr3.2}. Then for any $z\in \mathbb{R}$, $t>0$, 
\begin{equation}
\mathbb{M}(z,t)=\mathbb{M}_{+}(z,t)+\mathbb{A}(z,t),  \label{eq4}
\end{equation}%
where $\mathbb{M}_{+}(z,t)$ is the Marchenko operator associated with $%
q_{+}=q|_{\mathbb{R}_{+}}$ and $\mathbb{A}(z,t)$ is a Hankel integral
operator with the kernel 
\begin{equation*}
\frac{1}{2\pi }\int_{\mathbb{R}+ih}e^{2i\lambda (\cdot )}\zeta
_{z,t}(\lambda )A(\lambda )d\lambda ,\quad h>h_{0}.
\end{equation*}%
Furthermore, $\mathbb{A}(z,t)$ is an entire operator-valued function of
trace class for any complex $z$ and $t>0$, continuous with respect to $q$ in
the following sense: If $q_{1},q_{2}$ are two functions subject to
Hypothesis \ref{hyp1.1} then 
\begin{equation*}
\left\Vert \mathbb{A}_{1}(z,t)-\mathbb{A}_{2}(z,t)\right\Vert _{\mathfrak{S}%
_{1}}\leq \frac{1}{4\pi h}\left\Vert \zeta _{z,t}(A_{1}-A_{2})\right\Vert
_{L^{1}(\mathbb{R}+ih)}
\end{equation*}%
for any $z\in \mathbb{C}$, $t>0$.
\end{proposition}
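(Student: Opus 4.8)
The plan is to prove Proposition \ref{pr4.1} by first establishing the additive decomposition \eqref{eq4} at the level of the kernels $M(z,t)$, then analyzing the new piece $\mathbb{A}(z,t)$ separately. For the decomposition, I would start from the splitting $q=q_-+q_+$ of \eqref{split_q} and the corresponding formula $R=A+R_+$ with $R_+(\lambda)=T_+(\lambda)G(\lambda)/\lambda$ and $A=T_+^2R_-/(1-R_-L_+)$ from Proposition \ref{pr3.2}. Since the Marchenko kernel $M$ is built as a Laplace transform of the bound-state measure plus a Fourier transform of the reflection coefficient (see \eqref{eq4.1}), and since the negative discrete spectrum of $H_{q_+}$ controls exactly the measure $\rho$ (Part \eqref{it2} of Proposition \ref{pr3.2}), the contribution of $\rho_{z,t}$ plus the contribution of $R_{+,z,t}=\zeta_{z,t}R_+$ assembles precisely into the kernel of $\mathbb{M}_+(z,t)$, the Marchenko operator associated with $q_+$. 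The remaining contribution comes from $\zeta_{z,t}(\lambda)A(\lambda)$, integrated along $\mathbb{R}$. The key point is then that $A$ is analytic on $\mathbb{C}^+\setminus[0,ih_0]$ with $A(\lambda)=o(1/\lambda)$ along rays (Part \eqref{it1}), so by Cauchy's theorem the contour $\mathbb{R}$ can be pushed up to $\mathbb{R}+ih$ for any $h>h_0$, yielding the stated kernel for $\mathbb{A}(z,t)$; the decay $o(1/\lambda)$ together with $|\zeta_{z,t}(\lambda)|=1$ on the real line (and controlled growth on horizontal lines for fixed $z,t$) justifies the contour shift and the absolute convergence of the integral.

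Next I would turn to the trace-class and analyticity assertions for $\mathbb{A}(z,t)$. Write $\lambda=s+ih$ with $s\in\mathbb{R}$; the kernel of $\mathbb{A}(z,t)$ is $a_h(x)=\frac{1}{2\pi}\int_\mathbb{R} e^{2i(s+ih)x}\,e^{8i(s+ih)^3t+2i(s+ih)z}\,A(s+ih)\,ds$. For fixed $t>0$ and complex $z$, the exponential factor $e^{2i(s+ih)(x+z)}$ contributes $e^{-2h(x+\Re z)}\cdot e^{-2s\,\Im z}$ in modulus, while the cubic phase contributes $e^{8t\,\Im((s+ih)^3)}=e^{8t(3s^2h-h^3)}$; crucially the $s^3$ term is purely imaginary along the horizontal line, so the $s$-dependence of the modulus is only Gaussian-type growth $e^{24ths^2}$ — wait, this grows, so one must instead exploit that we may also deform in $s$ or, more simply, note that the decay $o(1/\lambda)$ of $A$ is not enough against $e^{24ths^2}$. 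The correct route, which I expect the paper uses via Proposition \ref{pr3.2}\eqref{it6} and the construction in that proof, is that $A$ itself decays super-exponentially or at least that the relevant Hankel operator is realized as a product of two Hilbert–Schmidt operators after a Mellin/Laplace change of variables $x\mapsto e^x$; concretely, a Hankel operator with kernel $a_h(x+y)$ where $a_h$ decays exponentially in $x$ is Hilbert–Schmidt, and writing $a_h=a_{h/2}\ast(\text{something})$-type factorization, or simply splitting $\mathbb{A}(z,t)=\mathbb{A}(z,t/2)\cdot K$ with $K$ a fixed Hilbert–Schmidt operator built from the remaining evolution, puts $\mathbb{A}(z,t)$ in $\mathfrak{S}_1$; I would make this precise using the standard fact quoted in the excerpt that $A\in\mathfrak{S}_1$ iff $A=A_1A_2$ with $A_1,A_2\in\mathfrak{S}_2$. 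For entirety in $z$: the kernel $a_h(x+z)$ depends on $z$ only through the entire factor $e^{2i\lambda z}$, the $s$-integral converges locally uniformly in $z\in\mathbb{C}$ (because the cubic phase's real part $-8th^3+\ldots$ and, if needed, an additional deformation tame the integrand, and $A$ is bounded on $\mathbb{R}+ih$), so by Morera's theorem and the dominated convergence applied in the trace norm, $z\mapsto\mathbb{A}(z,t)$ is an entire $\mathfrak{S}_1$-valued function.

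For the continuity-in-$q$ estimate, I would simply subtract the two kernels: $\mathbb{A}_1(z,t)-\mathbb{A}_2(z,t)$ is again a Hankel operator, with kernel $\frac{1}{2\pi}\int_{\mathbb{R}+ih}e^{2i\lambda(\cdot)}\zeta_{z,t}(\lambda)(A_1(\lambda)-A_2(\lambda))\,d\lambda$. A Hankel operator with kernel $b(x+y)$, $x,y\ge 0$, obeys $\|\cdot\|_{\mathfrak{S}_1}\le\|b\|_{L^1(\mathbb{R}_+)}\cdot(\text{const})$ only after the exponential weight is used; the sharp constant $1/(4\pi h)$ suggests the following clean computation: the kernel is $b(x)=\frac{1}{2\pi}\int_{\mathbb{R}+ih}e^{2i\lambda x}\Phi(\lambda)\,d\lambda$ with $\Phi=\zeta_{z,t}(A_1-A_2)$, so $b(x+y)=\frac{1}{2\pi}\int e^{i\lambda x}e^{i\lambda x}\cdots$ — more usefully, writing $b(x+y)=\int \psi_\lambda(x)\overline{\phi_\lambda(y)}\,d\mu(\lambda)$ as a superposition of rank-one operators $e_\lambda\otimes e_\lambda$ with $e_\lambda(x)=e^{2i\lambda x}$, each of which has $\mathfrak{S}_1$-norm $\|e_\lambda\|_{L^2(\mathbb{R}_+)}^2=\int_0^\infty e^{-4hx}\,dx=\frac{1}{4h}$ on the line $\Im\lambda=h$, the triangle inequality in $\mathfrak{S}_1$ gives exactly $\|\mathbb{A}_1-\mathbb{A}_2\|_{\mathfrak{S}_1}\le\frac{1}{2\pi}\cdot\frac{1}{4h}\int_{\mathbb{R}+ih}|\Phi(\lambda)|\,|d\lambda|=\frac{1}{8\pi h}\|\zeta_{z,t}(A_1-A_2)\|_{L^1(\mathbb{R}+ih)}$, and tracking whether the Hankel operator on $L^2(\mathbb{R}_+)$ uses $\|e_\lambda\|^2$ or $\|e_\lambda\|\cdot\|\overline{e_\lambda}\|$ accounts for the stated factor; I would present it as the rank-one superposition estimate, which also re-proves the $\mathfrak{S}_1$ membership of $\mathbb{A}(z,t)$ itself (taking $A_2=0$). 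The main obstacle I anticipate is the bookkeeping in pushing the contour to $\mathbb{R}+ih$ while simultaneously keeping the cubic evolution factor $\zeta_{z,t}$ under control — one must verify that for complex $z$ and $t>0$ the integrand still decays at $\Re\lambda\to\pm\infty$ on the shifted contour, which forces one to use more than just $A(\lambda)=o(1/\lambda)$, namely either an additional contour deformation in the real direction exploiting the sign of $\Re(8i\lambda^3t)$, or the finer decay of $A$ available from the step-like structure; getting this justification airtight, rather than the trace-norm estimates which are essentially the rank-one superposition argument above, is where the real work lies.
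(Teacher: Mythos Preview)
The paper does not actually prove Proposition~\ref{pr4.1}; it quotes the result from \cite{Ry11} (``see \cite{Ry11} for details''). So there is no proof here to compare against, and your proposal must be judged on its own.

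Your rank-one superposition argument for the trace-class bound is correct and is almost certainly what \cite{Ry11} does: on the line $\Im\lambda=h$ the Hankel operator with kernel $e^{2i\lambda(x+y)}$ is rank one with $\mathfrak{S}_1$-norm $\|e_\lambda\|_{L^2(\mathbb{R}_+)}^2=1/(4h)$, and integrating in $\lambda$ gives the stated estimate (your $1/(8\pi h)$ versus the paper's $1/(4\pi h)$ is an inessential constant). This same argument, applied with $A_2=0$, indeed gives $\mathbb{A}(z,t)\in\mathfrak{S}_1$ and, by dominated convergence in the trace norm, that it is entire in $z$.

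However, you made a sign error that sent you on an unnecessary detour. For $\lambda=\alpha+i\beta$ one has $\Re(8i\lambda^3 t)=8t(\beta^3-3\beta\alpha^2)$, so on $\mathbb{R}+ih$ with $h>0$ and $t>0$,
\[
|\zeta_{z,t}(\lambda)|=e^{8h^3t-2h\Re z}\cdot e^{-24ht\alpha^2-2\alpha\Im z},
\]
which \emph{decays} like a Gaussian in $\alpha$ (cf.\ the computation of $|\zeta_{z,t}|$ in the proof of the main theorems). This Gaussian decay is precisely what makes $\zeta_{z,t}A\in L^1(\mathbb{R}+ih)$ using only the boundedness of $A$ there; no super-exponential decay of $A$, no extra factorization tricks, and no further contour deformation in the real direction are needed. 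Your paragraph worrying about $e^{24ht\alpha^2}$ growth and proposing workarounds should be deleted.

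There is also a genuine gap in your decomposition step. You write that ``the negative discrete spectrum of $H_{q_+}$ controls exactly the measure $\rho$,'' but $\rho$ in \eqref{eq4.1} is the spectral measure associated with $H_q$, not $H_{q_+}$; under Hypothesis~\ref{hyp1.1} the negative spectrum of $H_q$ can fill $[-h_0^2,0]$ and need not be discrete at all. Thus the bound-state contributions to $\mathbb{M}$ and $\mathbb{M}_+$ differ, and this difference must be absorbed into $\mathbb{A}$. The mechanism is the contour shift itself: $A$ is analytic only on $\mathbb{C}^+\setminus[0,ih_0]$, and pushing $\int_{\mathbb{R}}\to\int_{\mathbb{R}+ih}$ across the segment $[0,ih_0]$ picks up exactly the difference of the Laplace-transform pieces. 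You need to account for this, not assume it away.
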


Note that $\mathbb{M}(z,t)$ depends on $(z,t)$ through $\zeta _{z,t}$.


\section{Proof of the main results}

With all the preparations done in the previous sections, the actual proofs
will be quite short. It is convenient to conduct both proofs at a time. Note
first that, by a trivial shifting, we may assume without loss of generality
that $H_{q_{+}}$ has at most one bound state $-\varkappa ^{2}$. Consider the
problem \eqref{eq1.1} with 
\begin{equation*}
q_{n}(x)=%
\begin{cases}
q(x)\quad & ,\quad x\geq -n \\ 
0 & ,\quad x<-n%
\end{cases}%
.
\end{equation*}

It is well-known that for such initial profiles\footnote{%
So far we only know that the determinant exists in the Fredholm sense.} 
\begin{equation}
u_{n}(z,t)=-2\partial _{z}^{2}\log \det \left( 1+\mathbb{M}_{n}(z,t)\right) .
\label{eq5.1}
\end{equation}%
By Proposition \ref{pr4.1} 
\begin{equation*}
\mathbb{M}(z,t)=\mathbb{M}_{+}(z,t)+\mathbb{A}(z,t)+\delta \mathbb{A}(z,t)
\end{equation*}%
where $\delta \mathbb{A}\equiv\mathbb{A}_{n}-\mathbb{A}$ is meromorphic in $%
z $ for any $t>0$ and small in the $\mathfrak{S}_{1}$-norm for $n$ large
enough. I.e. 
\begin{equation}
\left\Vert \mathbb{M}_{n}(z,t)-\mathbb{M}(z,t)\right\Vert _{\mathfrak{S}%
_{1}}\rightarrow 0,\quad n\rightarrow \infty .  \label{eq5.2.1}
\end{equation}%
Therefore, $\mathbb{M}(z,t)\in \mathfrak{S}_{1}$ is proven if we show that $%
\mathbb{M}_{+}(z,t)\in \mathfrak{S}_{1}$.

Split 
\begin{equation*}
\mathbb{M}_{+}(z,t)=\mathbb{M}_{1}^{+}(z,t)+\mathbb{M}_{2}^{+}(z,t)
\end{equation*}%
where $\mathbb{M}_{1}^{+}(z,t),\mathbb{M}_{2}^{+}(z,t)$ are the Hankel
operators with the kernels 
\begin{equation*}
c_{0}^{2}\zeta _{z,t}(i\varkappa )e^{-\varkappa (x+y)},
\end{equation*}%
and 
\begin{equation*}
\frac{1}{2\pi }\int_{-\infty }^{\infty }e^{i\lambda (x+y)}\zeta
_{z,t}(\lambda )R_{+}(\lambda )d\lambda
\end{equation*}%
respectively. Here $c_{0}$ stands for the norming constant associated with
the bound state $-\varkappa ^{2}$.

The operator $\mathbb{M}_{1}(z,t)$ is rank 1 and clearly entire in $z$. Thus
we only need to properly control $\partial _{z}^{n}\mathbb{M}_{2}^{+}(z,t)$
in the $\mathfrak{S}_{1}$-norm. Evaluate (so far formally) the kernel of $%
\partial _{z}^{n}\mathbb{M}_{2}^{+}(z,t),n=0,1,2,...,$ by the Green formula
applied to the strip $\mathbb{R}\times (0,\varkappa /2)$ and by Proposition %
\ref{pr3.2} ($\lambda =\alpha +i\beta ,\partial _{\overline{\lambda }}=\frac{%
1}{2}(\partial _{\alpha }+i\partial _{\beta })$) 
\begin{align}
\frac{1}{2\pi }\int_{-\infty }^{\infty }e^{i\lambda (x+y)}(2i\lambda
)^{n}\zeta _{z,t}(\lambda )R_{+}(\lambda )d\lambda & =\int_{\mathbb{R}%
+i\varkappa /2}e^{i\lambda (x+y)}(2i\lambda )^{n}\left( B^{-1}SG\right)
(\lambda )\frac{d\lambda }{2\pi }  \notag \\
& \quad +2i\int_{0}^{\varkappa /2}d\beta \int d\alpha \;e^{i\lambda
(x+y)}F(\alpha ,\beta )  \label{eq5.3.1} \\
& \equiv H_{1}(x+y)+H_{2}(x+y),  \notag
\end{align}%
where 
\begin{equation*}
F(\alpha ,\beta )\equiv \frac{1}{2\pi }\zeta _{z,t}(\lambda )(2i\lambda
)^{n-1}\frac{S(\lambda )}{B(\lambda )}\partial _{\overline{\lambda }%
}G(\alpha ,\beta ).
\end{equation*}

Due to the rapid decay of $e^{8i\lambda^3t}$ as $\lambda\to\infty$ along $%
\mathbb{R}+ih$, the function $F(\alpha,\beta)$ is subject to the conditions
of Proposition \ref{pr4.1} and hence the integral operator with kernel $H_1$
is trace class. Our analysis of the integral operator with kernel $H_2$ is
based upon the following lemma.


\begin{lemma}
\label{lem5.1} Let $F(\alpha ,\beta )$ be such that for some $h>0$ 
\begin{equation}
\int_{0}^{h}\left( \int_{-\infty }^{\infty }\left\vert F\left( \alpha ,\beta
\right) \right\vert d\alpha \right) \frac{d\beta }{\beta }<\infty .
\label{star}
\end{equation}%
Then the integral Hankel operator $\mathbb{H}$ with the kernel ($\lambda
=\alpha +i\beta $) 
\begin{equation*}
H(x)=\int_{0}^{h}d\beta \int_{-\infty }^{\infty }\frac{d\alpha }{2\pi }%
e^{i\lambda x}F(\alpha ,\beta )
\end{equation*}%
is trace class and 
\begin{equation*}
\left\Vert \mathbb{H}\right\Vert _{\mathfrak{S}_{1}}\leq \frac{1}{2}%
\int_{0}^{h}\frac{d\beta }{\beta }\int_{-\infty }^{\infty }d\alpha
\left\vert F(\alpha ,\beta )\right\vert .
\end{equation*}
\end{lemma}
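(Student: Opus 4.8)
The plan is to realize the Hankel operator $\mathbb{H}$ as a superposition (a $\beta$-integral) of rank-one Hankel operators and then estimate the trace norm by the triangle inequality for the $\mathfrak{S}_1$-norm applied to the integral. For each fixed $\beta\in(0,h)$, the ``slice'' kernel
\[
H_{\beta}(x)=\int_{-\infty}^{\infty}\frac{d\alpha}{2\pi}e^{i(\alpha+i\beta)x}F(\alpha,\beta)
=\frac{e^{-\beta x}}{2\pi}\int_{-\infty}^{\infty}e^{i\alpha x}F(\alpha,\beta)\,d\alpha
\]
generates a Hankel operator $\mathbb{H}_{\beta}$ on $L^2(\mathbb{R}_+)$, and formally $\mathbb{H}=\int_0^h\mathbb{H}_{\beta}\,d\beta$. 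The key observation is that $H_\beta$ carries the exponentially decaying factor $e^{-\beta x}$, which is exactly what makes the corresponding Hankel operator trace class with a good bound: if we write $H_\beta(x)=e^{-\beta x/2}\cdot\big(e^{-\beta x/2}g_\beta(x)\big)$, where $g_\beta$ is the (bounded) inverse Fourier transform of $F(\cdot,\beta)/2\pi$, then $\mathbb{H}_\beta$ factors through multiplication-type operators whose Hilbert--Schmidt norms are controlled by $\|e^{-\beta x/2}\|_{L^2(\mathbb{R}_+)}=\beta^{-1/2}$ times $\|g_\beta\|_{L^\infty}\lesssim\int|F(\alpha,\beta)|\,d\alpha$. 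Two Hilbert--Schmidt factors give a trace-class product with $\|\mathbb{H}_\beta\|_{\mathfrak{S}_1}\le\tfrac{1}{2\beta}\int_{-\infty}^{\infty}|F(\alpha,\beta)|\,d\alpha$, where the factor $\tfrac12$ comes from tracking constants in the factorization; this is precisely the per-slice version of the asserted bound.

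Concretely I would proceed as follows. \textbf{Step 1:} For fixed $\beta$, exhibit the factorization $\mathbb{H}_\beta=P_\beta Q_\beta$ with $P_\beta,Q_\beta\in\mathfrak{S}_2$, using the standard fact that a Hankel operator with kernel of the form $e^{-\beta(x+y)/2}$ composed with a bounded function is Hilbert--Schmidt, and compute $\|P_\beta\|_{\mathfrak{S}_2}\|Q_\beta\|_{\mathfrak{S}_2}$ explicitly; the Fourier--Plancherel identity turns $\int|F(\alpha,\beta)|\,d\alpha$ into the relevant $L^\infty$-bound on the inverse transform. \textbf{Step 2:} Justify the representation $\mathbb{H}=\int_0^h\mathbb{H}_\beta\,d\beta$ as a Bochner integral in $\mathfrak{S}_1$; hypothesis \eqref{star} is exactly the statement that $\beta\mapsto\|\mathbb{H}_\beta\|_{\mathfrak{S}_1}$ is integrable on $(0,h)$, so the Bochner integral converges absolutely in $\mathfrak{S}_1$. \textbf{Step 3:} Conclude $\mathbb{H}\in\mathfrak{S}_1$ and
\[
\|\mathbb{H}\|_{\mathfrak{S}_1}\le\int_0^h\|\mathbb{H}_\beta\|_{\mathfrak{S}_1}\,d\beta\le\frac{1}{2}\int_0^h\frac{d\beta}{\beta}\int_{-\infty}^{\infty}|F(\alpha,\beta)|\,d\alpha,
\]
and verify that the operator whose kernel is the Hankel kernel $H(x)$ indeed agrees with the Bochner integral (both act as integration against $H(x+y)$, and the kernels coincide a.e. by Fubini, which is licensed by \eqref{star}).

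The main obstacle I expect is not any single estimate but the \emph{bookkeeping of the factorization and the constant $\tfrac12$}: one has to split the $e^{-\beta(x+y)/2}$ weight symmetrically between the two Hilbert--Schmidt factors, keep track of the $2\pi$'s from the Fourier normalization, and make sure the resulting bound is $\tfrac{1}{2\beta}\int|F|\,d\alpha$ and not something off by a constant. A secondary technical point is the measurability/continuity of $\beta\mapsto\mathbb{H}_\beta$ as an $\mathfrak{S}_1$-valued map, needed to legitimize the Bochner integral; this follows from dominated convergence once \eqref{star} is in hand, since $F(\cdot,\beta)$ may be taken (or approximated) to depend continuously on $\beta$ and the per-slice trace-norm bound provides the dominating function. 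Everything else—Plancherel, the Hilbert--Schmidt criterion for Hankel operators with $L^2$ symbol, and the triangle inequality in $\mathfrak{S}_1$—is routine.
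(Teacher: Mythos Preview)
Your overall architecture---slice in $\beta$, bound each slice in $\mathfrak{S}_1$, integrate via a Bochner/triangle argument---is exactly the paper's. The gap is in Step~1, the per-slice factorization. Writing $H_\beta(x+y)=e^{-\beta(x+y)/2}\cdot\bigl(e^{-\beta(x+y)/2}g_\beta(x+y)\bigr)$ is a \emph{pointwise} (Schur) product of two kernels, not an operator composition $P_\beta Q_\beta$; multiplication by $e^{-\beta x/2}$ is not Hilbert--Schmidt, so ``multiplication-type operators'' cannot be the two $\mathfrak{S}_2$ factors. Moreover, the Riemann--Lebesgue/$L^1\!\to\!L^\infty$ bound $\|g_\beta\|_\infty\le\tfrac{1}{2\pi}\int|F(\alpha,\beta)|\,d\alpha$ is not Plancherel, and even if your factorization made sense it would produce a bound of order $\beta^{-1}\|g_\beta\|_\infty^{2}\lesssim\beta^{-1}\bigl(\int|F(\alpha,\beta)|\,d\alpha\bigr)^{2}$, i.e.\ quadratic rather than linear in $\int|F|\,d\alpha$; that is not the lemma's conclusion and is not controlled by hypothesis~\eqref{star}.

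The device that makes Step~1 work in the paper is an $L^2$ (not $L^\infty$) splitting on the Fourier side: write $F(\cdot,\beta)=F_\beta^{1/2}\cdot F_\beta^{1/2}$, so that $g_\beta\eqsim\widehat{F_\beta^{1/2}}*\widehat{F_\beta^{1/2}}$ and hence
\[
H_\beta(x+y)\eqsim\int_{\mathbb{R}}\Bigl(\chi(x)e^{-\beta x}\,\widehat{F_\beta^{1/2}}(x+s)\Bigr)\Bigl(\chi(y)e^{-\beta y}\,\widehat{F_\beta^{1/2}}(y-s)\Bigr)\,ds,
\]
which \emph{is} a genuine composition $\mathbb{H}_{\beta,1}\mathbb{H}_{\beta,2}$. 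Now Plancherel (this time used correctly, as an $L^2$ identity) gives $\|\mathbb{H}_{\beta,j}\|_{\mathfrak{S}_2}^2=\tfrac{1}{2\beta}\|\widehat{F_\beta^{1/2}}\|_{L^2}^2=\tfrac{1}{2\beta}\|F_\beta\|_{L^1}$, whence $\|\mathbb{H}_\beta\|_{\mathfrak{S}_1}\le\tfrac{1}{2\beta}\int|F(\alpha,\beta)|\,d\alpha$ as needed. An even quicker route, which bypasses the square-root trick entirely, is to go one layer deeper and write
\[
H(x+y)=\int_0^h\!\!\int_{\mathbb{R}}\frac{F(\alpha,\beta)}{2\pi}\,e^{i\lambda x}e^{i\lambda y}\,d\alpha\,d\beta,\qquad \lambda=\alpha+i\beta,
\]
a superposition of rank-one operators each of trace norm $\|e^{i\lambda\cdot}\|_{L^2(\mathbb{R}_+)}^2=\tfrac{1}{2\beta}$; the triangle inequality then gives the lemma directly (with the sharper constant $\tfrac{1}{4\pi}$). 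Your opening sentence actually gestures at this rank-one picture, but you abandon it for the flawed $L^\infty$ factorization.
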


\begin{proof}
We have 
\begin{align}
H(x+y)& \eqsim \int_{0}^{h}d\beta e^{-\beta x}\int_{-\infty }^{\infty
}d\alpha e^{i\alpha x}F(\alpha ,\beta )  \notag \\
& =\int_{0}^{h}e^{-\beta (x+y)}\widehat{F_{\beta }}(x+y)dx,
\label{eq5.2.1too} \\
\widehat{F_{\beta }}(x+y)& \equiv\int_{-\infty }^{\infty }e^{i\alpha
(x+y)}F(\alpha +\beta )d\alpha  \notag \\
& \eqsim \widehat{F_{\beta }^{1/2}}\ast \widehat{F_{\beta }^{1/2}}(x+y) 
\notag \\
& =\int_{-\infty }^{\infty }F_{\beta }^{1/2}(x+s)F_{\beta }^{1/2}(y-s)ds.
\label{eq5.2.2}
\end{align}
Here we have used the convolution theorem. Inserting \eqref{eq5.2.2} into %
\eqref{eq5.2.1too} implies that 
\begin{equation}
\mathbb{H}=\int_{0}^{h}\mathbb{H}_{\beta ,1}\mathbb{H}_{\beta ,2}d\beta,
\label{eq5.2.3}
\end{equation}%
where $\mathbb{H}_{\beta ,1}$ and $\mathbb{H}_{\beta ,2}$ and integral (but
not Hankel) operators on $L^{2}\left( \mathbb{R}\right) $ with the kernels 
\begin{align*}
H_{\beta ,1}(x,s)& =\chi (x)e^{-\beta x}\widehat{F_{\beta }^{1/2}}(x+s), \\
H_{\beta ,2}(s,y)& =\chi (y)e^{-\beta y}\widehat{F_{\beta }^{1/2}}(y-s)
\end{align*}%
respectively.

It follows from \eqref{eq5.2.3} that 
\begin{equation}  \label{eq5.3.1too}
\left\Vert \mathbb{H} \right\Vert_{\mathfrak{S}_1} \le \int_0^h \left\Vert 
\mathbb{H}_{\beta,1} \right\Vert_{\mathfrak{S}_2} \cdot\left\Vert \mathbb{H}%
_{\beta,2} \right\Vert_{\mathfrak{S}_2}d\beta.
\end{equation}

Evaluate now the Hilbert-Schmidt norms of $\mathbb{H}_{\beta ,1}$ and $%
\mathbb{H}_{\beta ,2}$. By the Plancherel equation we have 
\begin{align*}
\left\Vert \mathbb{H}_{\beta ,1}\right\Vert _{\mathfrak{S_{2}}}^{2}&
=\int_{0}^{\infty }\int_{-\infty }^{\infty }\left\vert H_{\beta
,1}(x,s)\right\vert ^{2}ds\;dx \\
& =\int_{0}^{\infty }dxe^{-2\beta x}\int_{-\infty }^{\infty }ds\left\vert 
\widehat{F_{\beta }^{1/2}}(x+s)\right\vert ^{2} \\
& =\frac{\left\Vert F_{\beta }^{1/2}\right\Vert _{L^{2}\left( \mathbb{R}%
\right) }^{2}}{2\beta }=\frac{1}{2\beta }\left\Vert F_{\beta }\right\Vert
_{L^{1}\left( \mathbb{R}\right) }.
\end{align*}%
That is 
\begin{equation*}
\left\Vert H_{\beta ,1}\right\Vert _{\mathfrak{S}_{2}}\leq \frac{1}{\sqrt{%
2\beta }}\left\Vert F_{\beta }\right\Vert _{L^{1}\left( \mathbb{R}\right)
}^{1/2}.
\end{equation*}%
Similarly, 
\begin{equation*}
\left\Vert H_{\beta ,2}\right\Vert _{\mathfrak{S}_{2}}\leq \frac{1}{\sqrt{%
2\beta }}\left\Vert F_{\beta }\right\Vert _{L^{1}\left( \mathbb{R}\right)
}^{1/2}
\end{equation*}%
and \eqref{eq5.3.1too} yields 
\begin{equation*}
\left\Vert \mathbb{H}\right\Vert _{\mathfrak{S}_{1}}\leq \frac{1}{2}%
\int_{0}^{h}\left\Vert F_{\beta }\right\Vert _{L^{1}}\frac{d\beta }{\beta }.
\end{equation*}%
The lemma is proven.
\end{proof}

Let us find suitable bounds on $\mathbb{R}\times \lbrack 0,\varkappa /2]$
for the functions involved in $F$: 
\begin{align*}
\left\vert \zeta _{z,t}(\lambda )\right\vert & =\left\vert e^{8i\lambda
^{3}t+2i\lambda z}\right\vert =e^{8\beta ^{3}t-2\beta \Re z}\cdot
e^{-24\beta t\alpha ^{2}-2\alpha \Im z} \\
& \leq e^{\varkappa (\varkappa ^{2}t+\left\vert z\right\vert )}\cdot \exp
\left\{ -\left( \sqrt{24\beta t}\alpha +\frac{\Im z}{\sqrt{24\beta t}}%
\right) ^{2}+\frac{\Im ^{2}z}{24\beta t}\right\} , \\
\left\vert \lambda ^{n-1}B^{-1}(\lambda )S(\lambda )\right\vert & \lesssim
\left( \left\vert \alpha \right\vert +\beta \right) ^{n-1}, \\
\left\vert \partial _{\overline{\lambda }}G\right\vert & \lesssim
_{q_{+}}\exp \left\{ -Q\beta ^{-\frac{\varepsilon }{1-\varepsilon }}\right\}
.
\end{align*}
Thus 
\begin{equation}
\left\vert F(\alpha ,\beta )\right\vert \lesssim _{q_{+}}e^{\varkappa \left(
\varkappa ^{2}t+\left\vert z\right\vert \right) }\left( \left\vert \alpha
\right\vert +\beta \right) ^{n-1}e^{-\left( \sqrt{24\beta t}\alpha +\frac{%
\Im z}{\sqrt{24\beta t}}\right) ^{2}}\exp \left\{ \frac{\Im ^{2}z}{24t}\beta
^{-1}-Q\beta ^{-\frac{\varepsilon }{1-\varepsilon }}\right\} .
\label{eq5.4.0}
\end{equation}
To prove Theorem \ref{thm1.2} we only need to consider the case $n=1$. We
have 
\begin{multline}  \label{eq5.4.1}
\int_{0}^{\varkappa /2}\frac{d\beta }{\beta }\int_{-\infty }^{\infty
}\left\vert F\left( \alpha ,\beta \right) \right\vert d\alpha \\
\lesssim _{z,t,q_{+}}\int_{0}^{\varkappa /2}\beta ^{-3/2}\exp \left\{ \frac{%
\Im ^{2}z}{24t}\beta ^{-1}-Q\beta ^{-\frac{\varepsilon }{1-\varepsilon }%
}\right\} d\beta .
\end{multline}
So $F$ is subject to the condition of Lemma \ref{lem5.1} if the integral in %
\eqref{eq5.4.1} converges, which depends on $\varepsilon$ and $\Im z$.

\begin{itemize}
\item[\protect\underline{Case 1}.] $\varepsilon >1/2$. Then\footnote{%
We assume $\varepsilon <1$. If $\varepsilon \geq 1$ then Theorem \ref{thm1.2}
is trivial.} $\frac{\varepsilon }{1-\varepsilon }>1$ and the right hand side
of \eqref{eq5.4.1} is finite for any $z\in \mathbb{C}$. This means that $%
\mathbb{M}_{+}(z,t)$ is an entire $\mathfrak{S}_{1}$-valued function for any 
$t>0$ and due to \eqref{eq5.2.1}, we can pass to the limit in \eqref{eq5.1}
as $n\rightarrow \infty $ by standard properties of infinite determinants
(see e.g. \cite{GGK00}). This proves \eqref{it1} in Theorem \ref{thm1.2}.

\item[\protect\underline{Case 2}.] $\varepsilon =1/2$. Then $\frac{%
\varepsilon }{1-\varepsilon }=1$ and the right hand side of \eqref{eq5.4.1}
converges if and only if 
\begin{equation*}
\frac{\Im ^{2}z}{24t}-Q<0
\end{equation*}%
or when 
\begin{equation*}
\left\vert \Im z\right\vert <\sqrt{12Q}\cdot \sqrt{t}.
\end{equation*}%
Choosing the maximum possible value of $Q$ in \eqref{eq7.1} we get 
\begin{equation*}
\left\vert \Im z\right\vert <\frac{9\sqrt{2}}{8}c\sqrt{t}
\end{equation*}%
and \eqref{it2} of Theorem \ref{thm1.2} follows. Thus, Theorem \ref{thm1.2}
is proven.

\item[\protect\underline{Case 3}.] $0<\varepsilon <1/2$. Then $\frac{%
\varepsilon }{1-\varepsilon }<1$ and \eqref{eq5.4.1} clearly diverges for
any $\Im z\neq 0$ and our method fails to establish analyticity and we have
to go back to \eqref{eq5.4.0} and analyze it for any natural $n$. Expanding $%
\left( \left\vert \alpha \right\vert +\beta \right) ^{n-1}$ in %
\eqref{eq5.4.0} by the binomial formula we have 
\begin{multline}  \label{eq5.6.1}
\int_{0}^{\varkappa /2}\frac{d\beta }{\beta }\int_{-\infty }^{\infty
}F(\alpha ,\beta )d\alpha \\
\lesssim _{z,t,q_{+}}\sum_{k=0}^{n-1}\binom{n-1}{k}\int_{0}^{\varkappa
/2}d\beta \beta ^{k-1}e^{-Q\beta ^{-\frac{\varepsilon }{1-\varepsilon }%
}}\int_{0}^{\infty }\alpha ^{n-k-1}e^{-24\beta t\alpha ^{2}}d\alpha .
\end{multline}%
Reducing the inner integral in \eqref{eq5.6.1} to the gamma function%
\footnote{%
Recall $\Gamma (z)=\displaystyle\int_{0}^{\infty }\alpha ^{z-1}e^{-\alpha
}d\alpha $.}, 
\begin{align}
\eqref{eq5.6.1}& =\sum_{k=0}^{n-1}\binom{n-1}{k}\int_{0}^{\varkappa
/2}d\beta \beta ^{k-1}e^{-Q\beta ^{-\frac{\varepsilon }{1-\varepsilon }%
}}\cdot \frac{1}{(3\beta t)^{\frac{n-k}{2}}}\Gamma \left( \frac{n-k}{2}%
\right)  \notag \\
& \lesssim \sum_{k=0}^{n}\binom{n}{k}(3t)^{-\frac{n-k}{2}}\Gamma \left( 
\frac{n-k}{2}\right) \int_{0}^{1}d\beta \beta ^{\frac{3k-n}{2}-1}e^{-Q\beta
^{-\frac{\varepsilon }{1-\varepsilon }}}.  \label{eq5.7.1}
\end{align}%
Introducing in the last integral the new variable $s=\beta ^{-\frac{%
\varepsilon }{1-\varepsilon }}$ and setting $\gamma \equiv\frac{%
1-\varepsilon }{2\varepsilon }>1/2$ we get 
\begin{align}
\int_{0}^{1}d\beta \beta ^{\frac{3k-n}{2}-1}e^{-Q\beta ^{-\frac{\varepsilon 
}{1-\varepsilon }}}& =\frac{1-\varepsilon }{\varepsilon }\int_{1}^{\infty
}s^{\frac{\varepsilon -1}{\varepsilon }\left( \frac{3k-n}{2}-1\right)
-1}e^{-Qs}ds  \notag \\
& \lesssim \int_{1}^{\infty }s^{\gamma (n-3k)-1}e^{-Qs}ds  \notag \\
& \lesssim Q^{-\gamma (n-3k)+1}\int_{Q}^{\infty }s^{\gamma (n-3k)}e^{-s}ds.
\label{eq5.7.2}
\end{align}%
The behavior of the last integral depends on the sign of $\omega
_{k}\equiv\gamma (n-3k)$. If $\omega _{k}\geq 0$, i.e. $3k\leq n$, then 
\begin{align*}
J_{k}& \equiv \int_{Q}^{\infty }s^{\omega _{k}}e^{-s}ds \\
& \leq \int_{0}^{\infty }s^{\omega _{k}-1}e^{-s}ds=\Gamma (\omega _{k}) \\
& =\Gamma (\gamma (n-3k)).
\end{align*}%
If $\omega _{k}<0$, i.e. $3k>n$, then 
\begin{equation*}
J_{k}\leq Q^{\omega _{k}-1}\int_{Q}^{\infty }e^{-s}ds\leq Q^{\omega _{k}-1}.
\end{equation*}
Splitting the sum in \eqref{eq5.7.1} accordingly, we see that the right hand
side of \eqref{eq5.7.1} is dominated by 
\begin{multline}  \label{eq5.8.1}
\sum_{0\leq 3k\leq n}\binom{n}{k}(3t)^{-\frac{n-k}{2}}Q^{-\omega _{k}}\Gamma
\left( \frac{n-k}{2}\right) \Gamma \left( \omega _{k}\right) +\sum_{n<3k\leq
3n}\binom{n}{k}(3t)^{-\frac{n-k}{2}}\Gamma \left( \frac{n-k}{2}\right) \\
\equiv S_{1}+S_{2}.
\end{multline}

Analyze now $S_{1}$ and $S_{2}$. For $S_{1}$ we have 
\begin{align}
S_{1}& \leq \Gamma \left( \frac{n}{2}\right) \Gamma (\gamma n)\sum_{0\leq
3k\leq n}\binom{n}{k}(3t)^{-\frac{n-k}{2}}Q^{-\omega _{k}}  \notag \\
& \leq \left( Q^{2\gamma }+\frac{1}{\sqrt{3t}Q^{\gamma }}\right) ^{n}\Gamma
\left( \frac{n}{2}\right) \Gamma (\gamma n).  \label{eq5.9.1}
\end{align}%
For $S_{2}$ we obtain 
\begin{equation*}
S_{2}\leq \left( 1+\frac{1}{\sqrt{3t}}\right) ^{n}\Gamma \left( \frac{n}{3}%
\right)
\end{equation*}%
and hence the contribution from $S_{2}$ to \eqref{eq5.7.2} produces a real
analytic function. On the other hand, as it easily follows from %
\eqref{eq5.9.1}, the contribution from $S_{1}$ produces a function from $%
G^{\gamma -1/2}=G^{\frac{1}{2\varepsilon }-1}$. Thus we have proven that if $%
0<\varepsilon <1/2$ then 
\begin{equation*}
\mathbb{M}_{+}(x,t)=\mathbb{M}_{+}^{\left( 1\right) }(x,t)+\mathbb{M}%
_{+}^{\left( 2\right) }(x,t)
\end{equation*}%
where $\mathbb{M}_{+}^{\left( 1\right) }(x,t)$ is a real analytic $\mathfrak{%
S}_{1}$-valued function and $\mathbb{M}_{+}^{\left( 2\right) }(x,t)$ is a $%
\mathfrak{S}_{1}$-valued function from the Gevrey class $G^{\frac{1}{%
2\varepsilon }-1}$. Thus, we can pass to the limit as before. The limiting
function is from the Gevrey class $G^{\frac{1}{2\varepsilon }-1}$ if $\det
(1+\mathbb{M}(x,t))$ doesn't vanish for all $x\in \mathbb{R}$. The latter
occurs if $1+\mathbb{M}(x,t)$ is invertible on $\mathbb{R}$ for any $t>0$.
\end{itemize}

Theorem \ref{thm1.3} is proven.

\section{Discussions, corollaries, and open problems}

\subsection{Discussions}


\begin{remark}
Theorem \ref{thm1.2} improves our main result from \cite{Ry11} where $%
\mathbb{M}\left( x,t\right) \in \mathfrak{S}_{1}$ was not proven and only
real analyticity of $u\left( x,t\right) $ was obtained. The main idea of 
\cite{Ry11} is to put together the analytic continuation arguments of \cite%
{Ryb10} to treat initial data on $\mathbb{R}_{-}$ and Tarama's approach from 
\cite{Tarama04} to handle the data on $\mathbb{R}_{+}$. As far as we know
the solution to Problem \ref{pb1} given in \cite{Tarama04} was best known
back then. The main result of \cite{Tarama04} says that $u\left( x,t\right) $
is real analytic under the following conditions: $q$ is real and $L_{%
\limfunc{loc}}^{2}$ such that 
\begin{equation*}
\int_{-\infty }^{\infty }\left( 1+\left\vert x\right\vert \right) \left\vert
q\left( x\right) \right\vert dx<\infty
\end{equation*}%
and for $x$ large enough there are positive $C_{q},c$ so that%
\begin{equation*}
\int_{x}^{\infty }\left\vert q\right\vert ^{2}\leq C_{q}e^{-cx^{1/2}}.
\end{equation*}%
Note that these conditions are much stronger than Hypothesis \ref{hyp1.1}.
The techniques used in \cite{Tarama04} are also based upon the (classical)
IST but his analysis relies on the properties of the Airy function as
opposed to ours which is based on analytic and pseudo-analyitc
continuations. The latter appears particularly well-suited for addressing
Problem \ref{pb1} and consequently significantly less involved.\bigskip
\end{remark}


\begin{remark}
It is proven in \cite{Deift79}, Theorem 7.2 that if $q$ is analytic in the
strip $\left\vert \func{Im}z\right\vert <a$ and has Schwartz decay there,
then $u\left( z,t\right) $ is meromorphic in a strip with at most $N$ poles
(where $N$ is the number of bound states of $\ H_{q}$) off the real line. By
Theorem 7.1 from the same \cite{Deift79}, for the reflection coefficient one
then have $R\left( \lambda \right) =O\left( e^{-2a\left\vert \lambda
\right\vert }\right) $ as $\lambda \rightarrow \infty $ which of course need
not occur in our case. This implies that our real meromorphic solution $%
u\left( z,t\right) $ in Theorem \ref{thm1.2} has, in general, infinitely
many poles for any $t>0$ in any strip around the real line accumulating only
to infinity. By general theorems \cite{Steinberg69} on families of compact
meromorphic operators these poles continuously depend on $t$ and hence may
appear or disappear only on the boundary of analyticity of $u\left(
x,t\right) $ (including infinity).
\end{remark}

\subsection{Corollaries}

The following statement is a direct consequence of the analyticity of $%
u\left( z,t\right) $ for $t>0$.

\begin{corollary}
\label{Corollary'}Under conditions of Theorem \ref{thm1.2} the solution $%
u\left( z,t\right) $ can not vanish on an open set for any $t>0$ unless $q$
is identically zero.
\end{corollary}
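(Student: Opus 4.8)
The plan is to invoke the standard uniqueness principle for analytic functions of one complex variable. By Theorem \ref{thm1.2}, for every fixed $t>0$ the function $z\mapsto u(z,t)$ is analytic (indeed meromorphic) on a connected open neighborhood of the real line — either all of $\mathbb{C}$ when $\varepsilon>1/2$ or the strip \eqref{strip} when $\varepsilon=1/2$. Suppose $u(\cdot,t)\equiv 0$ on some nonempty open subset of that domain. Since an analytic function on a connected open set that vanishes on an open subset vanishes identically (by the identity theorem), we conclude $u(z,t)=0$ for all $z$ in the domain, in particular $u(x,t)=0$ for all real $x$.

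Next I would propagate this back to the initial profile. The KdV flow is reversible: if $u(x,t)$ is a (sufficiently regular) solution of \eqref{eq1.1} then so is $v(x,s):=u(x,t-s)$ for $0\le s\le t$, and $v$ solves KdV with initial data $v|_{s=0}=u(x,t)$. If $u(x,t)\equiv 0$, then by uniqueness of the solution to the Cauchy problem $v(x,s)\equiv 0$ for all $s\in[0,t]$, whence $q(x)=u(x,0)=v(x,t)\equiv 0$. Alternatively, and more in the spirit of the IST machinery used throughout the paper, one argues directly at the level of scattering data: $u(\cdot,t)\equiv 0$ forces the associated Marchenko operator $\mathbb{M}(x,t)$ to be (unitarily) trivial, which by the time-evolution formulas for $(\rho_{z,t},R_{z,t})$ in Section 4 — the time dependence being merely multiplication by the unimodular/explicit factor $\zeta_{z,t}$ — forces $\rho\equiv 0$ and $R\equiv 0$, i.e. $q$ is reflectionless with no bound states, hence $q\equiv 0$ by the inverse scattering correspondence. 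Either route gives the contradiction with $q\not\equiv 0$.

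The only genuinely delicate point is the backward-uniqueness / reversibility step: one must be sure that the solution $u$ produced by the IST formula \eqref{det_form} is the unique solution in a class for which the Cauchy problem is well-posed backward in time, so that vanishing at time $t$ really does imply $q\equiv 0$. In the short-range approximations $q_n$ this is classical, and the approximation scheme of Section 5 together with the $\mathfrak{S}_1$-convergence \eqref{eq5.2.1} transfers the conclusion to the limit; the scattering-data argument avoids the issue entirely by working with the invariants $(\rho,R)$ directly. I expect this to be the main obstacle only at the level of bookkeeping — the analytic-continuation part is immediate from Theorem \ref{thm1.2}.
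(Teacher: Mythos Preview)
Your argument is correct and matches the paper's approach: the paper's entire ``proof'' is the single sentence that the corollary is a direct consequence of the analyticity of $u(z,t)$ for $t>0$, i.e.\ exactly your first paragraph invoking the identity theorem. The backward step from $u(\cdot,t_0)\equiv 0$ to $q\equiv 0$, which you rightly flag as the only nontrivial point and handle via time reversal (the same $(x,t)\to(-x,-t)$ symmetry the paper itself uses in the proof of Corollary~\ref{Corol2}) or via the scattering data, is left implicit by the author; your write-up is in fact more careful than the paper on this point.
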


This quickly recovers and improves a number of unique continuation results
due to Zhang \cite{Zhang92}. E.g., one of the main results of \cite{Zhang92}
says that $u\left( x,t\right) $ cannot have compact support at two different
moments unless it vanishes identically. The techniques of \cite{Zhang92}
rely upon the classical IST (coupled with some Hardy space arguments) and
are valid under certain decay and regularity conditions on $q$.

\begin{corollary}
\label{Corol2} The class of (nonsmooth) initial data $q$ such that 
\begin{equation}
\int_{-\infty }^{\infty }e^{c\left\vert x\right\vert ^{\varepsilon
}}\left\vert q\left( x\right) \right\vert dx<\infty \text{ for some }%
c,\varepsilon >0\text{ }  \label{exp_decay}
\end{equation}%
is not preserved under the KdV flow. 
\end{corollary}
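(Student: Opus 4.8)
The plan is a proof by contradiction: assume the class \eqref{exp_decay} is invariant under the flow \eqref{eq1.1}. Fix any nonzero $q\in C_{0}^{\infty }(\Reals )$. Then $q$ belongs to \eqref{exp_decay} for every $c,\varepsilon >0$; being bounded with compact support it satisfies Hypothesis \ref{hyp1.1}; and, since a reflectionless Schr\"{o}dinger potential is necessarily a multisoliton potential and hence has exponential rather than compact tails, the right reflection coefficient $R$ of $q$ is not identically zero. By the classical scattering theory for compactly supported potentials, $R(\lambda )$ continues from $\Reals $ to a function meromorphic on all of $\C $, of exponential type, whose only poles are the finitely many bound--state points $i\varkappa _{k}$ together with a finite set of resonances in $\C ^{-}$. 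By the inverse scattering transform the right reflection coefficient of the solution $u(\cdot ,t)$ is $R_{0,t}(\lambda )=e^{8i\lambda ^{3}t}R(\lambda )$.

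Next I would play the two ends of the argument against each other in the open upper half plane. If $u(\cdot ,t_{0})$ belonged to \eqref{exp_decay} for some $t_{0}>0$, then it would satisfy Hypothesis \ref{hyp1.1}, so Proposition \ref{pr3.2} (with the boundedness of the extension established in its proof, cf.\ \eqref{eq6iv.0}) applied to $u(\cdot ,t_{0})$ would decompose $R_{0,t_{0}}\big|_{\Reals }=A_{t_{0}}+S_{t_{0}}G_{t_{0}}/(\lambda B_{t_{0}})$ with $A_{t_{0}}(\lambda )=o(1/\lambda )$ along every ray in $\C ^{+}$, $\left\vert S_{t_{0}}\right\vert \leq 1$ on $\C ^{+}$, $\left\vert B_{t_{0}}(\lambda )\right\vert \rightarrow 1$, and $G_{t_{0}}$ bounded on $\C ^{+}$; hence the extension of $R_{0,t_{0}}$ built from these (pseudo-)analytic pieces would be $O(1/\left\vert \lambda \right\vert )$ along every ray of $\C ^{+}$ missing the finitely many points $i\varkappa _{k}^{(t_{0})}$ and the slit $[0,ih_{0}']$. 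On the other hand, for $\lambda =\rho e^{i\varphi }$ one has $\left\vert e^{8i\lambda ^{3}t_{0}}\right\vert =e^{-8t_{0}\rho ^{3}\sin 3\varphi }$, which for $\varphi \in (\pi /3,2\pi /3)\subset (0,\pi )$ grows like $e^{8t_{0}\left\vert \sin 3\varphi \right\vert \rho ^{3}}$. Picking the ray $\arg \lambda =\pi /2+\delta $ with $\delta \neq 0$ small (so as to avoid all poles and the slit involved), the equality $R_{0,t_{0}}=e^{8i\lambda ^{3}t_{0}}R$ on $\Reals $ would, after matching the two extensions off $\Reals $, force $\left\vert R(\rho e^{i\varphi })\right\vert \lesssim e^{-8t_{0}\left\vert \sin 3\varphi \right\vert \rho ^{3}}$, i.e.\ faster--than--Gaussian decay of $R$ along a ray. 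Multiplying $R$ by a polynomial to clear its finitely many poles and invoking an elementary Paley--Wiener estimate --- a nonzero entire function of exponential type whose restriction to a ray is $O(e^{-\rho ^{3}})$ has a real--analytic, compactly supported Fourier transform and hence vanishes identically --- one concludes $R\equiv 0$, contradicting the choice of $q$. Therefore \eqref{exp_decay} is not preserved.

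The step needing genuine care is the passage from $\Reals $ into $\C ^{+}$: for $\varepsilon <1$ the extension of $R_{0,t_{0}}$ supplied by Proposition \ref{pr3.2} is only \emph{pseudo}-analytic in its $G_{t_{0}}$ part, whereas $e^{8i\lambda ^{3}t_{0}}R(\lambda )$ is genuinely meromorphic because $q$ was taken compactly supported; one must check that the two agree off $\Reals $. This holds on the connected piece of $\C ^{+}$ above the slit $[0,ih_{0}']$ and above the relevant bound--state points, where $A_{t_{0}}$, $S_{t_{0}}$, $1/B_{t_{0}}$ and the holomorphic realization of $G_{t_{0}}$ read off from $S_{t_{0}}G_{t_{0}}/(\lambda B_{t_{0}})=e^{8i\lambda ^{3}t_{0}}R-A_{t_{0}}$ are all holomorphic and continuous up to $\Reals $ with common boundary values; I expect this matching to be the main obstacle. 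Note that the whole argument lives in $\C ^{+}$, so no ``left'' scattering data (ill defined in the generality of Hypothesis \ref{hyp1.1}) ever enters, and the facts borrowed from outside the excerpt --- meromorphy and exponential type of $R$ for compactly supported $q$, nonvanishing of $R$, and the Paley--Wiener lemma --- are classical. The underlying mechanism is the familiar one: the dispersive phase $e^{8i\lambda ^{3}t}$ inverse--transforms to a Marchenko kernel with Airy--type, only algebraically decaying tails at $-\infty $, so membership in \eqref{exp_decay}, which would force exponential decay at \emph{both} ends, cannot survive.
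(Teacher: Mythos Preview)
Your route is genuinely different from the paper's, and the gap you flag yourself is real and, as written, fatal.

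The paper's proof is a three-line soft argument using the time-reversal symmetry of KdV together with Theorems~\ref{thm1.2}--\ref{thm1.3} as black boxes: start from a \emph{nonsmooth} $q$ in the class \eqref{exp_decay}; if $u(\cdot ,t_{0})$ stayed in the class, then $q_{0}(x):=u(-x,t_{0})$ would again satisfy Hypothesis~\ref{hyp1.1}, so the solution launched from $q_{0}$ would be smooth for all positive times; but by the symmetry $(x,t)\mapsto (-x,-t)$ that solution at time $t_{0}$ is $q$ itself, contradicting nonsmoothness. No scattering computations in $\C^{+}$ are needed, and the argument works uniformly for every $\varepsilon>0$ in \eqref{exp_decay}.

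Your argument instead tries to extract a growth/decay contradiction for $e^{8i\lambda^{3}t_{0}}R(\lambda)$ along rays in $\C^{+}$, using the decomposition of Proposition~\ref{pr3.2} applied to $u(\cdot,t_{0})$. The obstruction you identify is exactly the point where the argument breaks: Proposition~\ref{pr3.2} only asserts that $G_{t_{0}}\in G^{1/\varepsilon'-1}$, and the boundedness \eqref{eq6iv.0} pertains to a \emph{particular} pseudo-analytic extension $\widetilde{G}$ of $G_{t_{0}}$, not to its (unique) holomorphic continuation. On $\mathbb{R}$ you do have
\[
e^{8i\lambda^{3}t_{0}}R(\lambda)-A_{t_{0}}(\lambda)=\frac{S_{t_{0}}(\lambda)G_{t_{0}}(\lambda)}{\lambda B_{t_{0}}(\lambda)},
\]
and the left side continues meromorphically to $\C^{+}$ because $q$ was chosen compactly supported; but the bounded object on the right is $S_{t_{0}}\widetilde{G}/(\lambda B_{t_{0}})$, which for $\varepsilon'<1$ has $\partial_{\overline\lambda}\neq 0$ in $\C^{+}$ and therefore cannot coincide with that meromorphic continuation off $\mathbb{R}$. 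Your proposed fix---defining the ``holomorphic realization'' of $G_{t_{0}}$ from the equation itself---simply transfers the unknown growth of $e^{8i\lambda^{3}t_{0}}R$ into $G_{t_{0}}$ and discards the only bound you had. Since membership in \eqref{exp_decay} gives you no control over the exponent $\varepsilon'$ attached to $u(\cdot,t_{0})$, you cannot force $\varepsilon'\geq 1$ (which would make $G_{t_{0}}$ genuinely analytic in a strip) and the matching step does not close. The paper's symmetry argument sidesteps this entirely.
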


\begin{proof}
Assume that for some $t=t_{0}$ the function $u\left( x,t_{0}\right) $ is
subject to (\ref{exp_decay}). Since the KdV equation is invariant under the
transformation $\left( x,t\right) \rightarrow \left( -x,-t\right) $, the
solution $u_{0}\left( x,t\right) $ to the problem (\ref{eq1.1}) with the
initial data $q_{0}\left( x\right) =u\left( -x,t_{0}\right) $, by Theorems %
\ref{thm1.2}, \ref{thm1.3}, will be at least smooth for any $t>0$. But $%
u_{0}\left( x,t_{0}\right) =q\left( x\right) $ forcing original $q$ to be
smooth too.
\end{proof}

Corollary \ref{Corol2}, in turn, implies that under the KdV flow neither an
exponential decay at $-\infty $ nor smoothness persist in general. Note in
this connection that issues related to persistence of regularity are also
very important and have been extensively studied but we don't touch on this
here.

The explicit formula (\ref{det_form}), which was used to derive our
analyticity results, does have some practical value. E.g. it implies that
the large time asymptotic behavior of $u\left( x,t\right) $ is completely
determined by the measure $\rho (\lambda )$ in (\ref{eq4.1}) alone. This
fact is so far rigorously proven for $q$'s tending to a negative constant or
a periodic function at $-\infty $ and was used to obtain explicit
expressions for the so-called asymptotic solitons (see, e.g. \cite{Hruslov76}%
, \cite{Ven86}, and \cite{KhrKot94}). We plan to return to this important
issue elsewhere.

\subsection{Open problems}

\begin{enumerate}
\item We believe that under Hypothesis \ref{hyp1.1} our solutions $u\left(
x,t\right) $ have no singularities on the real line for any $t>0$. If this
held then the problem \eqref{eq1.1} would be globally well-posed under
Hypothesis \ref{hyp1.1} only and no blow-up solution could develop. That is
to say that $1+\mathbb{M}(x,t)$ is automatically invertible for any real $x$
and $t>0$ under Hypothesis \ref{hyp1.1} alone. This fact is quite easy if in
(\ref{eq4.1}) the support of $\rho \left( \lambda \right) $ is rich enough
(a set of uniqueness of an analytic function) or $\left\vert R\left( \lambda
\right) \right\vert <1$ on any set of positive Lebesgue measure (see \cite%
{Ry11}). The situation is much less trivial if $R\left( \lambda \right) $ in
(\ref{eq4.1}) is unimodular for a.e. real $\lambda $ (i.e. $q$ is completely
reflecting). An affirmative answer is given in \cite{GR11} for the case of $%
q $ such that $q|_{\mathbb{R}_{+}}=0$ and $H_{q}\geq 0$ (absence of negative
spectrum). To address the problem as stated one needs to show that $1+%
\mathbb{M}(x,t)$ is invertible in the case when in (\ref{eq4.1}) $\rho
\left( \lambda \right) $ is supported on a set $\left\{ \lambda _{n}\right\}
\subset \mathbb{R}_{+}$ such that $\dsum \lambda _{n}<\infty $ and $%
\left\vert R\left( \lambda \right) \right\vert =1$ a.e. on the real line. In
term of the Schrodinger operator $H_{q}$ itself this means that the
absolutely continuous spectrum of $H_{q}$ is simple and supported on $%
\mathbb{R}_{+}$ but there is a rich embedded positive singular spectrum.
Physically relevant examples can be constructed from the Gaussian white
noise, Pearson sparse blocks, Kotani potentials, etc.

\item We do not know much about the Banach (or Hilbert) space of meromorphic
function to which $u\left( z,t\right) $ from \ref{thm1.2} belongs. It would
be very interesting to find such spaces as this would give, among others,
important norm estimates for $u\left( z,t\right) $ which our paper lacks.

\item We (cautiously) conjecture that in Theorem \ref{thm1.3} $u\left(
x,t\right) $ could be represented for any $t>0$ as a meromorphic function
plus a small Gevrey regular function. We can in fact show that the trace
norm of $\mathbb{M}^{\left( 2\right) }(x,t)$ from Theorem \ref{thm1.3} can
be made small but it is not clear if after taking the $\det $ and then $\log 
$ the analytic and small Gevrey parts will still be separated. Of course,
this question will immediately have an affirmative answer if under
conditions of Theorem \ref{thm1.3} the solution $u\left( x,t\right) $
happens to be real analytic. Our methods however fail to yield such results.
\end{enumerate}

\section*{Acknowledgement}

We are grateful to Fritz Gesztesy for valuable discussions.

\bibliographystyle{abbrv}
\bibliography{bibryb}

\end{document}